\newtheorem{theorem}{Theorem}
\newtheorem{lemma}[theorem]{Lemma}
\journal{Arxiv}
\begin{document}

\begin{frontmatter}

%% Title, authors and addresses

%% use the tnoteref command within \title for footnotes;
%% use the tnotetext command for theassociated footnote;
%% use the fnref command within \author or \affiliation for footnotes;
%% use the fntext command for theassociated footnote;
%% use the corref command within \author for corresponding author footnotes;
%% use the cortext command for theassociated footnote;
%% use the ead command for the email address,
%% and the form \ead[url] for the home page:
%% \title{Title\tnoteref{label1}}
%% \tnotetext[label1]{}
%% \author{Name\corref{cor1}\fnref{label2}}
%% \ead{email address}
%% \ead[url]{home page}
%% \fntext[label2]{}
%% \cortext[cor1]{}
%% \affiliation{organization={},
%%             addressline={},
%%             city={},
%%             postcode={},
%%             state={},
%%             country={}}
%% \fntext[label3]{}

\title{New Theoretical Insights and Algorithmic Solutions \\for Reconstructing Score Sequences from Tournament Score Sets}

%% use optional labels to link authors explicitly to addresses:
%% \author[label1,label2]{}
%% \affiliation[label1]{organization={},
%%             addressline={},
%%             city={},
%%             postcode={},
%%             state={},
%%             country={}}
%%
%% \affiliation[label2]{organization={},
%%             addressline={},
%%             city={},
%%             postcode={},
%%             state={},
%%             country={}}

\author{Bowen Liu} %% Author name
\ead{liubowenmbu@outlook.com}
\ead[url]{https://orcid.org/0009-0008-7697-0229} % 填写你的ORCID

%% Author affiliation
\affiliation{organization={Department of Mathematics, Shenzhen MSU-BIT University},%Department and Organization
            addressline={Shenzhen MSU-BIT University, Shenzhen, China.}, 
            city={Shenzhen},
            postcode={518172}, 
            state={Guangzhou},
            country={China}}

%% Abstract
\begin{abstract}
The \text{score set} of a tournament is defined as the set of its distinct out-degrees. In 1978, Reid proposed the conjecture that for any set of nonnegative integers $D$, there exists a tournament $T$ with a degree set $D$. In 1989, Yao presented an arithmetical proof of the conjecture, but a general polynomial-time construction algorithm is not known. This paper proposes a necessary and sufficient condition and a separate necessary condition, based on the existing Landau's theorem for the problem of reconstructing score sequences from score sets of tournament graphs. The necessary condition introduces a structured set that enables the use of group-theoretic techniques, offering not only a framework for solving the reconstruction problem but also a new perspective for approaching similar problems. In particular, the same theoretical approach can be extended to reconstruct valid score sets given constraints on the frequency of distinct scores in tournaments. Based on these conditions, we have developed three algorithms that demonstrate the practical utility of our framework: a polynomial-time algorithm and a scalable algorithm for reconstructing score sequences, and a polynomial-time network-building method that finds all possible score sequences for a given score set. Moreover, the polynomial-time algorithm for reconstructing the score sequence of a tournament for a given score set can be used to verify Reid's conjecture. These algorithms have practical applications in sports analysis, ranking prediction, and machine learning tasks such as learning-to-rank models and data imputation, where the reconstruction of partial rankings or sequences is essential for recommendation systems and anomaly detection.
\end{abstract}

%%Graphical abstract
%\begin{graphicalabstract}
%\includegraphics{grabs}
%\end{graphicalabstract}

%%Research highlights
\begin{highlights}
    \item \textbf{Theoretic framework}: Introduces a structured approach to score sequence reconstruction, providing both necessary and sufficient conditions.  
    \item \textbf{Efficient algorithms}: Develops two polynomial-time algorithms and a scalable method for reconstructing score sequences in tournaments.  
    \item \textbf{Heuristic optimization}: The scalable algorithm can be used as a heuristic for large-scale ranking and optimization problems.  
\end{highlights}

%% Keywords
\begin{keyword}
Tournament \sep Complexity Analysis \sep Polynomial Algorithm \sep Scalable Algorithm \sep Reid Conjecture

\MSC 68Q25 \sep 90C27 \sep 68W01
%% keywords here, in the form: keyword \sep keyword

%% PACS codes here, in the form: \PACS code \sep code

%% MSC codes here, in the form: \MSC code \sep code
%% or \MSC[2008] code \sep code (2000 is the default)

\end{keyword}

\end{frontmatter}

%% Add \usepackage{lineno} before \begin{document} and uncomment 
%% following line to enable line numbers
%% \linenumbers

%% main text
%%

\section{Introduction}

This paper explores the problem of correctly reconstructing score sequences from score sets of tournament graphs. In 1978, Reid \cite{reid1978scoresets} proposed the conjecture that for any set of nonnegative integers $D$, there exists a tournament $T$ whose degree set is $D$. In 1989, Yao \cite{yao1989reidconjecture} presented an arithmetical proof of the conjecture. Previous research shows that while specific algorithms can construct score sequences for a given small-scale score set, a complete polynomial-time algorithm for larger-scale cases remains unknown \cite{Reid86}.

This paper proposes a necessary and sufficient condition and a necessary condition based on Landau's existing theorem for correctly reconstructing the score sequence from a score set of a tournament graph. These conditions introduce a structured set that enables the use of group-theoretic techniques, and the solution space of the reconstruction problem forms a subset of this structured set. Moreover, the necessary and sufficient condition can be seen as a generalization of the upper bound given in Lemma 6 of \cite{Reid86}, with detailed explanation provided in Section \ref{sec:pol}. Beyond providing a rigorous framework for solving the reconstruction problem, these conditions also offer new perspectives for approaching similar problems. In particular, the same theoretical approach can be extended to reconstruct valid score sets given constraints on the frequency of distinct scores in tournaments. Based on these conditions, we have developed three algorithms that demonstrate the practical utility of our framework: a polynomial-time algorithm using dynamic programming, a scalable algorithm, and a polynomial-time network-building method that finds all possible score sequences for a given score set. Moreover, the scalable algorithm can be regarded as a special case of the Hole-Shift algorithm from \cite{Reid86} when the difference between the maximum and minimum elements in the score set is 1. These developments aim to fill existing research gaps and advance the theory of tournament graphs.

Beyond the theoretical aspect, this problem has practical applications in sports tournaments, particularly in leagues where each team plays against every other team, such as in football or chess \cite{Suksompong2021}. Reconstructing score sequences can help predict final rankings, assess the fairness of schedules, and determine the relative strengths of teams, which is useful for coaches and analysts \cite{Damkhi2018}. In the field of machine learning, the reconstruction of score sequences is related to ranking learning and data completion, where the goal is to infer global rankings from partial information. For example, from possible score sequences, we can use ranking methods such as the method described in \cite{moon1968topics} to predict rankings. Even when match results are incomplete, applying these methods allows us to estimate the relative standings of participants. This has direct applications in recommendation systems, where we predict a user’s preference sequence based on their total ratings, improving the system’s accuracy. Additionally, anomaly detection can benefit from analyzing possible score sequences to identify inconsistencies or noise in data. For instance, Miao \cite{MIAO2024103569} demonstrates how reconstruction-based methods can effectively identify anomalies by using contrastive generative adversarial networks to enhance detection accuracy.

By exploring both the theoretical and practical sides of this problem, this research contributes to the understanding of tournament graphs and their broader real-world implications, including machine learning and AI-driven tasks.

The paper is organized as follows. The necessary preparatory materials for starting the paper are in the Section \ref{sec:Prel}. The necessary and sufficient condition, as well as the necessary condition, are discussed in Section \ref{sec:condition}. The benefits of these conditions are outlined in Sections \ref{sec:benefit1} and \ref{sec:benefit2}. Polynomial-time algorithms are presented in Section \ref{sec:pol}, including one that reconstructs score sequences and another that builds a network to find all possible score sequences for a given score set, along with their experimental results. A scalable algorithm for reconstructing score sequences, along with its experimental results, is discussed in Section \ref{sec:fast}. Finally, the conclusions follow in Section \ref{sec:con}. The C++ implementation of the proposed algorithms is available from the authors upon request.

\section{Preliminaries}
\label{sec:Prel}

To introduce this paper, we begin by reviewing the concept of tournament graphs and illustrating their structure with examples. A tournament graph is a directed graph formed by assigning a direction to each edge in a complete graph. This means that for any two distinct vertices, there is exactly one directed edge connecting them. Tournament graphs naturally model competitive scenarios, where vertices represent players or teams, and a directed edge from vertex $u$ to vertex $v$ signifies that $u$ has defeated $v$.

Below, we present an example of a tournament graph \ref{fig:tour_small}. The score sequence of a tournament graph is defined as the sequence of out-degrees of its vertices, indicating the number of wins each player has achieved.

\begin{figure}[tbhp]
\centering
\includegraphics[width=0.2\textwidth]{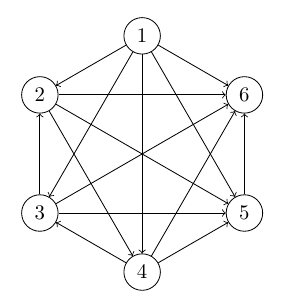}
\caption{An example of a tournament graph.}
\label{fig:tour_small}
\end{figure}

Building on this foundation, we next review Landau’s theorem and its reformulation, as they serve as the basis for the results developed in this paper.

\begin{theorem}[Landau \cite{Landau53_III}]\label{thm:lan}
 A non-decreasing sequence of non-negative integers 
$S = s_1, s_2, \ldots, s_m$ is a score sequence of an $m$-tournament if and only if
\begin{displaymath}
\sum_{i=1}^k s_i \geq \frac{k(k-1)}{2}, \quad \forall k \in \mathbb{N}, 1 \leq k \leq m,
\end{displaymath}
with equality when $k = m$.
\end{theorem}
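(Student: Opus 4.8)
The plan is to prove the two directions separately, treating necessity as a direct counting argument and sufficiency by induction on $m$. For necessity, I would assume $S$ is realized by some $m$-tournament $T$ whose vertices are labeled so that their out-degrees are $s_1 \le \cdots \le s_m$. Fixing $k$, I would restrict attention to the $k$ vertices of smallest out-degree and count the games played among them: there are exactly $\binom{k}{2} = \frac{k(k-1)}{2}$ such games, each awarding one win to a vertex of the group, so $\sum_{i=1}^k s_i$ (which counts these internal wins together with any wins against the remaining $m-k$ vertices) is at least $\frac{k(k-1)}{2}$. Taking $k = m$ turns the inequality into the identity $\sum_{i=1}^m s_i = \binom{m}{2}$, since every one of the $\binom{m}{2}$ games contributes exactly one unit to the total out-degree; this yields the required equality.

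For sufficiency, the idea is to build a realizing tournament inductively, peeling off one vertex at a time. Assuming the claim for sequences of length $m-1$, and given $S = (s_1 \le \cdots \le s_m)$ satisfying Landau's inequalities, I would introduce a vertex $v$ of largest score $s_m$ and declare that $v$ beats exactly $s_m$ of the other vertices and loses to the remaining $(m-1) - s_m$. Deleting $v$ lowers by one the score of each vertex that defeated it and leaves the others unchanged, producing a candidate sequence $S'$ of length $m-1$. The goal is to choose which vertices $v$ loses to so that, after re-sorting, $S'$ is non-decreasing, non-negative, and again satisfies Landau's inequalities; the inductive hypothesis then yields an $(m-1)$-tournament realizing $S'$, into which $v$ is reinserted with the prescribed orientations to realize $S$. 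The natural choice is to have $v$ lose to the vertices carrying the largest scores.

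The main obstacle is precisely the verification that $S'$ still satisfies Landau's conditions while remaining properly ordered. Decrementing the largest scores can break monotonicity whenever several vertices share the threshold value, so I expect to need a careful tie-breaking rule (decrementing, within a block of equal scores, the entries that keep the sorted order intact) and then a prefix-sum estimate showing that every partial sum of $S'$ stays at or above $\frac{k(k-1)}{2}$ while the full sum drops by exactly $(m-1) - s_m$ to hit $\binom{m-1}{2}$. The delicate cases are the prefixes where the original inequality held with equality, since there the decrement has the least slack to absorb.

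As an independent check, and a route matching the algorithmic flavour of this paper, I would also recast sufficiency as a feasibility question for an integral flow: a source feeding one unit into each of the $\binom{m}{2}$ game-nodes, each game-node forwarding its unit to one of its two incident vertex-nodes, and each vertex-node $i$ draining at most $s_i$ units to the sink. A tournament with score sequence $S$ exists iff this network admits a flow of value $\binom{m}{2}$, and by the max-flow/min-cut theorem with integral capacities the existence reduces to a cut condition which, minimized over vertex subsets of each fixed size by taking the smallest scores, collapses exactly to Landau's inequalities together with the total-sum equality.
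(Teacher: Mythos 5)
The paper itself does not prove this theorem at all: its ``proof'' is the single line ``See \cite{Landau53_III}'', deferring entirely to Landau's original paper, so any genuine proof you give is automatically a different route. Your necessity argument is complete and correct: among the $k$ vertices of smallest out-degree, the $\binom{k}{2}$ internal games already force $\sum_{i=1}^k s_i \geq \binom{k}{2}$, and counting all games gives the equality at $k=m$. For sufficiency you offer two routes, and they are not equally solid. The inductive peeling argument is only a plan: the step you yourself flag --- verifying that after deleting the top vertex and decrementing the $(m-1)-s_m$ largest remaining scores (with a tie-breaking rule) the shortened sequence still satisfies every Landau inequality, especially at prefixes where the original inequality was tight --- is exactly where all the work lies, and you do not carry it out; taken alone, that branch is a gap, not a proof. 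The max-flow/min-cut argument, however, is a correct and essentially complete proof: with unit capacities from the source into the $\binom{m}{2}$ game nodes, edges from each game node to its two endpoint vertex nodes, and capacity $s_i$ from vertex node $i$ to the sink, integrality of max flow makes a flow of value $\binom{m}{2}$ equivalent to an assignment of a winner to every game in which vertex $i$ wins at most $s_i$ games, hence exactly $s_i$ once $\sum_{i=1}^m s_i = \binom{m}{2}$; and the min-cut condition, which for a source-side vertex set $A$ reads $\sum_{i \in A} s_i \geq \binom{|A|}{2}$ and is minimized for each cardinality $k$ by the $k$ smallest scores, is precisely Landau's system. This is a known classical style of proof (in the spirit of Fulkerson's flow arguments for degree sequences), and it arguably fits the algorithmic flavor of this paper better than a bare citation; if you present the proposal as a single proof, make the flow argument the official sufficiency proof and either drop the induction or finish its verification step.
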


\begin{proof}
 See \cite{Landau53_III}.
\end{proof}

To illustrate this, consider Fig. \ref{fig:tour_small}, where the non-decreasing sequence of non-negative integers $S$ is given by $\{0,1,3,3,3,5\}$. This sequence satisfies Landau’s theorem. For instance, when $k=3$, the left-hand side of the inequality, represented as $\sum_{i=1}^{k} s_i $, equals $4$, while the right-hand side, given by $\frac{k(k-1)}{2}$, equals $3$, ensuring that $4 \geq 3$.

Now, we consider only the case where $n \geq 2$.  
Let the degree set be $D = \{\alpha_i \mid \alpha_i \in \mathbb{N}, \alpha_i \geq 0, i = \in \{1,\dots, n\} \}$, where the elements \(\alpha_i\) are arranged in increasing order.  
Define the exponent set of the score as \(E = \{x_i \mid x_i \in \mathbb{N}, x_i \geq 1, i = \in \{1,\dots ,n\} \}\), where each \(x_i\) represents the number of times the score \(\alpha_i\) appears in a given score sequence. 

For example, in Fig. \ref{fig:tour_small}, the degree set is $D=\{0,1,3,5\}$, and the corresponding exponent set is $E=\{1,1,3,1\}$, indicating the number of times each score appears.

We define:  
\[
p_i = \sum_{k=1}^{i} x_k, \quad \text{where } p_0 = 0; \quad
q_i = \sum_{k=1}^{i} \alpha_k \cdot x_k, \quad \text{where } q_0 = 0.
\]  

Note that in this paper, the terms "degree set" and "score set" refer to the same concept.

Now, let's examine another reformulation of Landau's theorem, which also appears in \cite{Ivanyi2013}.

\begin{theorem}\label{thm:2.2}
Theorem \ref{thm:lan} can be reformulated as follows:
  A score sequence, characterized by a degree set $D$ and an exponent set $E$,  represents the score sequence of a tournament if and only if
\begin{displaymath}
\sum_{i=1}^k \alpha_i \cdot x_i \geq \frac{\sum_{i=1}^{k} {x_i}(\sum_{i=1}^{k} {x_i}-1)}{2}, \quad \forall k \in \mathbb{N}, 1 \leq k \leq n,
\end{displaymath}
with equality holding when $k = n$.
\end{theorem}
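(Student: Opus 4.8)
The plan is to recognize that the reformulated inequalities in Theorem~\ref{thm:2.2} are precisely Landau's inequalities from Theorem~\ref{thm:lan} restricted to the \emph{block boundaries} of the score sequence, and then to show that imposing them only at these boundaries already forces every intermediate Landau inequality. Concretely, the full score sequence $S = s_1,\dots,s_m$ with $m = p_n$ is obtained by listing each distinct score $\alpha_i$ exactly $x_i$ times, so that $s_j = \alpha_i$ whenever $p_{i-1} < j \le p_i$. Under this encoding one checks directly that $\sum_{j=1}^{p_k} s_j = q_k$ and $\binom{p_k}{2} = \tfrac{p_k(p_k-1)}{2}$, so Landau's inequality at index $k = p_i$ coincides term-for-term with the $i$-th reformulated inequality, and Landau's equality at $k = m = p_n$ coincides with the required equality at $k = n$.

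The forward direction is then immediate: if $S$ is a tournament score sequence, Landau's inequalities hold for every $k$, and in particular at the boundary indices $k = p_1,\dots,p_n$, which are exactly the reformulated inequalities. The work lies in the converse.

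For the converse I would fix an index $k$ with $p_{i-1} < k < p_i$ lying strictly inside the $i$-th block and write $k = p_{i-1} + t$ with $0 < t < x_i$. Since the $t$ scores added after position $p_{i-1}$ all equal $\alpha_i$, the Landau partial sum is $\sum_{j=1}^{k} s_j = q_{i-1} + t\,\alpha_i$, and the deficit
\[
f(t) \;=\; q_{i-1} + t\,\alpha_i \;-\; \frac{(p_{i-1}+t)(p_{i-1}+t-1)}{2}
\]
is, viewed as a real function of $t$, a downward-opening (concave) quadratic, because the coefficient of $t^2$ equals $-\tfrac12$. The two reformulated inequalities at the boundaries $k = p_{i-1}$ and $k = p_i$ state exactly that $f(0) = q_{i-1} - \binom{p_{i-1}}{2} \ge 0$ and $f(x_i) = q_i - \binom{p_i}{2} \ge 0$. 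By concavity, $f$ dominates the chord joining $(0,f(0))$ and $(x_i,f(x_i))$ on the interval $[0,x_i]$, and a chord between two non-negative heights stays non-negative; hence $f(t)\ge 0$ for every integer $t$ with $0 < t < x_i$. This recovers Landau's inequality at the arbitrary intermediate index $k$, completing the equivalence.

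The step I expect to be the main obstacle is this last concavity argument: one must argue that the block-boundary constraints alone control every intermediate prefix, and the cleanest justification is the observation that the Landau deficit, regarded as a function of the offset $t$ within a block of constant score, is concave and therefore attains its minimum at the endpoints of the block. Care must also be taken with the degenerate cases $x_i = 1$ (no interior index exists, so there is nothing to prove) and the first block $i=1$ (where $p_0 = 0$, $q_0 = 0$, and $f(0) = 0$), but these fit the same framework without modification.
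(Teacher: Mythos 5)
Your proposal is correct and uses essentially the same argument as the paper: the paper also reduces the converse to showing that, within a block of constant score $\alpha_k$, the linear function $\alpha_k x + b$ dominating the convex quadratic $\frac{(x+c)(x+c-1)}{2}$ at the two block boundaries forces domination at all interior offsets, which is precisely your observation that the Landau deficit $f(t)$ is concave in the offset and hence controlled by its endpoint values. The only cosmetic difference is that the paper phrases this as a proof by contradiction comparing the two graphs, while you argue directly via the chord of the concave deficit; the forward (trivial subset) direction is handled identically in both.
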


\begin{proof}

Assume that the score sequence defined by \(D\) and \(E\) satisfies Theorem \ref{thm:2.2} but does not satisfy Theorem \ref{thm:lan}. Then, \(\exists k \in \mathbb{N}\) such that 
\begin{displaymath}
\sum_{i=1}^{k-1} \alpha_i \cdot x_i + \sum_{i=1}^{x} \alpha_k < \frac{(\sum_{i=1}^{k-1} {x_i}+x)(\sum_{i=1}^{k-1} {x_i}+x-1)}{2}, \quad \forall x \in \mathbb{N}, 0<x<x_k, 
\end{displaymath}
yet for \(x=0\) and \(x=x_k\), the conditions of Theorem \ref{thm:2.2} are satisfied. Considering the graphs of the functions \(\alpha_k \cdot x + b\) and \(\frac{(x+c)(x+c-1)}{2}\) (here \(b=\sum_{i=1}^{k-1} \alpha_i \cdot x_i, \quad c=\sum_{i=1}^{k-1} {x_i}\) are constants), it is evident that for any \(0 < x < x_k\), the inequality \(\alpha_k \cdot x + b > \frac{(x+c)(x+c-1)}{2}\) holds. Therefore, the assumption is incorrect, and the score sequence defined by \(D\) and \(E\) does satisfy \ref{thm:2.2}, then it must also satisfy Theorem \ref{thm:lan}.

Furthermore, since Theorem \ref{thm:2.2} consists only of a subset of the inequalities in Theorem \ref{thm:lan}, along with the final equality, if the score sequence defined by \(D\) and \(E\) satisfies Theorem \ref{thm:lan}, then it also satisfies Theorem \ref{thm:2.2}.

\end{proof}

%This reformulation of Landau's theorem also appears in \cite{Ivanyi2013}.

\section{Conditions for the Existence of the Score Sequence}
\label{sec:condition}

In this section, we present both the necessary and sufficient condition, as well as the necessary condition, based on the existing Landau's theorem for the problem of reconstructing score sequences from score sets of tournament graphs. The following two theorems correspond to the necessary and sufficient condition and the necessary condition proposed in this paper. Additionally, we introduce an extended theorem for the necessary condition from a group-theoretic perspective.

These conditions not only provide theoretical insights into the structure of score sequences but also facilitate efficient algorithmic design. The implications of these conditions on computational complexity and practical reconstruction methods will be discussed in Section \ref{sec:benefit1} and Section \ref{sec:benefit2}.

\begin{theorem}\label{thm:2.3}
Assume there exists a degree set $D$ and an exponent set $E$ that satisfy the following condition:
\begin{displaymath}
\sum_{i=1}^k \alpha_i \cdot x_i \geq \frac{\sum_{i=1}^{k} {x_i}(\sum_{i=1}^{k} {x_i}-1)}{2}, \quad \forall k \in \mathbb{N}, 1 \leq k \leq n,
\end{displaymath}
with equality when $k = n$, if and only if they satisfy this condition: 

$\forall k \in \mathbb{N}, 1 \leq k \leq n,$
\begin{equation}\label{eq:2.1}
1 \leq x_k \leq \frac{(2(\alpha_k-p_{k-1})+1)+\sqrt{(2(\alpha_k-p_{k-1})+1)^2+8(q_{k-1}-p_{k-1}(p_{k-1}-1)/2)}}{2},  
\end{equation}
\begin{displaymath}
and\quad q_{k-1}-p_{k-1}(p_{k-1}-1)/2 \geq 0, 
\end{displaymath}
with the right side of the inequality becoming equal when $k = n$.
\end{theorem}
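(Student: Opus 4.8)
The plan is to prove the equivalence one index at a time, collapsing the $k$-th inequality of Theorem~\ref{thm:2.2} into a single quadratic inequality in the unknown $x_k$. First I would use the recursions $p_k = p_{k-1} + x_k$ and $q_k = q_{k-1} + \alpha_k x_k$ to rewrite the $k$-th Landau inequality $q_k \ge p_k(p_k-1)/2$ entirely in terms of $x_k$ and the already-fixed quantities $p_{k-1}, q_{k-1}, \alpha_k$. Expanding and clearing the factor of $2$ produces the monic quadratic
$$x_k^2 - \bigl(2(\alpha_k - p_{k-1}) + 1\bigr)\,x_k + \bigl(p_{k-1}(p_{k-1}-1) - 2q_{k-1}\bigr) \le 0,$$
whose larger root is exactly the upper bound appearing in \eqref{eq:2.1}; a direct comparison of discriminants confirms that the quantity under the square root matches $(2(\alpha_k-p_{k-1})+1)^2 + 8\bigl(q_{k-1} - p_{k-1}(p_{k-1}-1)/2\bigr)$.

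The key observation is that the constant term of this quadratic equals $-2\bigl(q_{k-1} - p_{k-1}(p_{k-1}-1)/2\bigr)$, so the side condition $q_{k-1} - p_{k-1}(p_{k-1}-1)/2 \ge 0$ is precisely the statement that the parabola is nonpositive at $x_k = 0$. Since it opens upward, this forces its smaller root to lie at or below $0$ and its larger root at or above $0$, and in particular makes the discriminant automatically nonnegative, so the square root in \eqref{eq:2.1} is real. Because every exponent satisfies $x_k \ge 1 > 0$, the lower-root constraint is vacuous, and the two-sided quadratic condition collapses to the single upper bound $x_k \le x_k^{+}$ (denoting by $x_k^{+}$ the larger root), with equality in the quadratic, i.e. $q_k = p_k(p_k-1)/2$, holding exactly when $x_k = x_k^{+}$.

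With this per-index reduction in hand, which I would isolate as a Key Lemma (assuming $q_{k-1} - p_{k-1}(p_{k-1}-1)/2 \ge 0$ and $x_k \ge 1$, the $k$-th Landau inequality is equivalent to $x_k \le x_k^{+}$, with equality corresponding to equality), I would close the argument by noting that the side condition at index $k$ is identical to the $(k-1)$-st Landau inequality, which is trivially true at $k=1$ since $p_0 = q_0 = 0$. For the forward direction, assuming the condition of Theorem~\ref{thm:2.2} supplies all these side conditions immediately, and the Key Lemma then yields each upper bound, with equality at $k=n$. For the converse, the side conditions of Theorem~\ref{thm:2.3} already furnish the Landau inequalities for $k = 1, \dots, n-1$, and applying the Key Lemma at $k=n$, where $x_n = x_n^{+}$, upgrades the terminal inequality to the required equality.

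The main obstacle I anticipate is essentially bookkeeping: matching the closed-form root to the stated bound sign-for-sign (verifying the discriminant identity) and handling the equality case at $k=n$ without circularity, since the side condition at $k$ encodes the inequality at $k-1$. Care is also needed to confirm that the nonpositivity of the constant term genuinely renders the lower-root constraint redundant rather than merely loose, as this is exactly what licenses replacing the quadratic condition by the single inequality in \eqref{eq:2.1}.
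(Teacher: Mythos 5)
Your proposal is correct and follows essentially the same route as the paper's own proof: rewriting the $k$-th Landau inequality as an upward-opening quadratic in $x_k$ with coefficients in $p_{k-1},q_{k-1}$, using the side condition $q_{k-1}-p_{k-1}(p_{k-1}-1)/2\ge 0$ to guarantee a nonnegative discriminant and a nonpositive lower root, and collapsing the two-sided constraint to the single upper bound in \eqref{eq:2.1}. Your treatment is in fact slightly tidier than the paper's on two points it leaves implicit --- the explicit equality case $x_n=x_n^{+}$ at $k=n$ and the observation that the side condition at index $k$ is exactly the $(k-1)$-st Landau inequality --- but these are refinements of the same argument, not a different one.
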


\begin{proof}

\begin{displaymath}
\forall k \in \mathbb{N}, 1 \leq k \leq n, 
\sum_{i=1}^k \alpha_i \cdot x_i \geq \frac{\sum_{i=1}^{k} {x_i}(\sum_{i=1}^{k} {x_i}-1)}{2}
\end{displaymath}
\begin{displaymath}
\Leftrightarrow 2(\sum_{i=1}^k \alpha_i \cdot x_i) \geq (\sum_{i=1}^{k} {x_i})^2-\sum_{i=1}^{k} {x_i}
\end{displaymath}
\begin{displaymath}
\Leftrightarrow 2\cdot q_{k-1} +2\alpha_k \cdot x_k \geq (p_{k-1}+x_k)^2-p_{k-1}-x_k
\end{displaymath}
\begin{displaymath}
\Leftrightarrow 2\cdot q_{k-1} +2\alpha_k \cdot x_k \geq {p_{k-1}}^2+{x_k}^2+2{x_k{p_{k-1}}}-p_{k-1}-x_k
\end{displaymath}
\begin{displaymath}
\Leftrightarrow 0 \geq {x_k}^2-(2(\alpha_k-p_{k-1})+1)x_k-2(q_{k-1}-{p_{k-1}}(p_{k-1}-1)/2).
\end{displaymath}

Define $\triangle$ as follows:
\begin{equation}\label{eq:del}
\triangle = (2(\alpha_k-p_{k-1})+1)^2+8(q_{k-1}-{p_{k-1}}(p_{k-1}-1)/2).
\end{equation}

From the first part of Theorem \ref{thm:2.3}, $\sum_{i=1}^k \alpha_i \cdot x_i \geq \frac{\sum_{i=1}^{k} {x_i}(\sum_{i=1}^{k} {x_i}-1)}{2}$, $\forall k \in \mathbb{N}, 1 \leq k \leq n$, it follows that for $k \geq 2$, $q_{k-1}-{p_{k-1}}(p_{k-1}-1)/2 \geq 0$, thus $\triangle \geq 0$. For $k = 1$, it follows from the hypothesis that both $p_0$ and $q_0$ are zero. Moreover, $q_{k-1}-{p_{k-1}}(p_{k-1}-1)/2 = 0$, which implies $\triangle \geq 0$.

From the second part of Theorem \ref{thm:2.3}, we also have $\triangle \geq 0$. Therefore: 

\begin{displaymath}
\frac{(2(\alpha_k-p_{k-1})+1)-\sqrt{\triangle}}{2} \leq x_k \leq \frac{(2(\alpha_k-p_{k-1})+1)+\sqrt{\triangle}}{2}.
\end{displaymath}

As previously established, we always have $q_{k-1}-{p_{k-1}}(p_{k-1}-1)/2 \geq 0$. Therefore, $\frac{(2(\alpha_k-p_{k-1})+1)-\sqrt{\triangle}}{2}<0$.

From the left-hand side of Theorem \ref{thm:2.3}, it follows that $\exists x_k \geq 1, x_k\in \mathbb{N}$ satisfying $\frac{(2(\alpha_k-p_{k-1})+1)-\sqrt{\triangle}}{2} \leq x_k \leq \frac{(2(\alpha_k-p_{k-1})+1)+\sqrt{\triangle}}{2}$. Thus, we obtain the right-hand side of Theorem \ref{thm:2.3}. From the right-hand side of Theorem \ref{thm:2.3}, it follows that there exists $x_k$ satisfying $1 \leq x_k \leq \frac{(2(\alpha_k-p_{k-1})+1)+\sqrt{\triangle}}{2}$, thus $x_k\in \mathbb{N}$ satisfies $\frac{(2(\alpha_k-p_{k-1})+1)-\sqrt{\triangle}}{2} \leq x_k \leq \frac{(2(\alpha_k-p_{k-1})+1)+\sqrt{\triangle}}{2}$ and $x_k \geq 1$. Thus, we obtain the left-hand side of Theorem \ref{thm:2.3}.

\end{proof}

\begin{theorem}\label{thm:2.4}
Assume there exists a degree set $D$ and an exponent set $E$ that satisfy the following condition:
$\forall k \in \mathbb{N}, 1 \leq k \leq n,$
\begin{displaymath}
1 \leq x_k \leq \frac{(2(\alpha_k-p_{k-1})+1)+\sqrt{(2(\alpha_k-p_{k-1})+1)^2+8(q_{k-1}-p_{k-1}(p_{k-1}-1)/2)}}{2},  
\end{displaymath}
\begin{displaymath}
and\quad q_{k-1}-p_{k-1}(p_{k-1}-1)/2 \geq 0. 
\end{displaymath}
with equality holding on the right-hand side when $k = n$, then they satisfy this condition: 
\begin{displaymath}
\exists f\in \mathbb{N},f\leq (2\alpha_n+1)/2,
\end{displaymath}
\begin{displaymath}
(2\alpha_n+1-2f)f=\sum_{i=1}^{n-1} {(\alpha_n-\alpha_i)x_i}. 
\end{displaymath}
\end{theorem}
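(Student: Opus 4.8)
The plan is to reduce the existence of $f$ to a single identity coming from the boundary equality in Landau's theorem. By Theorem \ref{thm:2.3} the hypothesis is equivalent to the condition of Theorem \ref{thm:2.2}, and in particular the case $k=n$ holds with equality. Writing $m = p_n = \sum_{i=1}^{n} x_i$ for the number of vertices, this equality reads $q_n = \sum_{i=1}^{n} \alpha_i x_i = \frac{m(m-1)}{2}$, the familiar statement that the total number of wins equals the number $\binom{m}{2}$ of games played. This is the only consequence of the hypothesis I expect to need.

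Next I would rewrite the right-hand side of the claimed equation in closed form. Expanding gives $\sum_{i=1}^{n-1}(\alpha_n-\alpha_i)x_i = \alpha_n p_{n-1} - q_{n-1}$, and substituting $p_{n-1} = m - x_n$ together with $q_{n-1} = q_n - \alpha_n x_n = \frac{m(m-1)}{2} - \alpha_n x_n$ cancels the $x_n$ terms and yields $\alpha_n m - \frac{m(m-1)}{2} = \frac{m(2\alpha_n+1-m)}{2}$. Thus the assertion reduces to: the quadratic $(2\alpha_n+1-2f)f = \frac{m(2\alpha_n+1-m)}{2}$ admits a natural-number solution $f$ with $f \le (2\alpha_n+1)/2$. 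Solving this quadratic in $f$, I would exhibit its two roots explicitly as $f = m/2$ and $f = (2\alpha_n+1-m)/2$ (their sum is $(2\alpha_n+1)/2$, which also explains the bound), each of which satisfies $f \le (2\alpha_n+1)/2$ and is non-negative once we know $m \le 2\alpha_n+1$.

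Two points then remain, and they carry the real content. First, the bound $m \le 2\alpha_n+1$: since every score is at most $\alpha_n$, we have $\frac{m(m-1)}{2} = q_n \le \alpha_n \sum_{i} x_i = \alpha_n m$, hence $\alpha_n \ge (m-1)/2$. Second, and the main obstacle, is integrality, as neither $m/2$ nor $(2\alpha_n+1-m)/2$ need be an integer on its own. Here I would invoke a parity argument: because $m + (2\alpha_n+1-m) = 2\alpha_n+1$ is odd, exactly one of $m$ and $2\alpha_n+1-m$ is even, so exactly one of the two roots is a natural number, and that root is the desired $f$. Positivity of $f$ is then automatic for $n \ge 2$, since $\alpha_1 < \alpha_n$ forces the right-hand side to be strictly positive. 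I expect the parity/integrality step to be the crux of the argument; everything else is algebraic bookkeeping driven by the boundary equality.
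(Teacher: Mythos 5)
Your proposal is correct, and it takes a recognizably different route from the paper's proof, even though the two arguments are arithmetically equivalent. The paper never introduces the vertex count $m=p_n$: it works directly with the discriminant, setting $\beta=\sqrt{(2(\alpha_n-p_{n-1})+1)^2+8(q_{n-1}-p_{n-1}(p_{n-1}-1)/2)}$, observes that $\beta$ must be an odd integer (otherwise $x_n$ could not be a natural number, since $2(\alpha_n-p_{n-1})+1$ is odd), expands $\beta^2=(2\alpha_n+1)^2-8\sum_{i=1}^{n-1}(\alpha_n-\alpha_i)x_i$, factors the sum as $\frac{(2\alpha_n+1-\beta)(2\alpha_n+1+\beta)}{8}$, and then argues that exactly one of the two even factors is divisible by $4$, writing it as $4l$ and substituting to obtain $f=l$. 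You instead pass back through the equivalence of Theorem \ref{thm:2.3} to recover the Landau equality $q_n=\binom{m}{2}$, compute the right-hand side in closed form as $\frac{m(2\alpha_n+1-m)}{2}$, and solve the quadratic in $f$ explicitly, with the parity of $2\alpha_n+1$ selecting the integral root among $m/2$ and $(2\alpha_n+1-m)/2$. The correspondence between the two proofs is $\beta=|2\alpha_n+1-2m|$, under which the paper's two factors become $2m$ and $2(2\alpha_n+1-m)$, and its divisibility-by-four dichotomy becomes exactly your parity statement. What your version buys is transparency: $f$ is exhibited concretely, the bound $f\le(2\alpha_n+1)/2$ is explained as the sum of the two roots, and the auxiliary inequality $m\le 2\alpha_n+1$ is derived cleanly from $q_n\le\alpha_n m$; what the paper's version buys is self-containedness, since it extracts everything (including the analogue of that inequality, namely $\beta\le 2\alpha_n+1$) from the hypothesis as literally stated, without re-invoking Theorem \ref{thm:2.3}.
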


\begin{proof}

\begin{displaymath}
x_n = \frac{(2(\alpha_n-p_{n-1})+1)+\sqrt{(2(\alpha_n-p_{n-1})+1)^2+8(q_{n-1}-p_{n-1}(p_{n-1}-1)/2)}}{2},
\end{displaymath}
\begin{displaymath}
then\quad \sqrt{(2(\alpha_n-p_{n-1})+1)^2+8(q_{n-1}-p_{n-1}(p_{n-1}-1)/2)} \quad is \quad odd \quad because
\end{displaymath}
\begin{displaymath}
2(\alpha_n-p_{n-1})+1 \quad is \quad odd,\quad and \quad only \quad even \quad numbers \quad are \quad divisible \quad by \quad 2.
\end{displaymath}
Let $\beta\geq 0$, 
\begin{subequations}
\begin{align}
    \beta^2
    &=(2(\alpha_n-p_{n-1})+1)^2+8(q_{n-1}-p_{n-1}(p_{n-1}-1)/2)\\
    &=(2\alpha_n+1-2p_{n-1})^2+8(q_{n-1}-p_{n-1}(p_{n-1}-1)/2)\\
    &=(2\alpha_n+1)^2 +4p_{n-1}^2-4(2\alpha_n+1)p_{n-1}-4p_{n-1}(p_{n-1}-1)+8q_{n-1}\\
    &=(2\alpha_n+1)^2-8\alpha_n p_{n-1}+8q_{n-1}\\
    &=(2\alpha_n+1)^2-8(\sum_{i=1}^{n-1} {\alpha_n x_i}-8\sum_{i=1}^{n-1} {\alpha_i x_i})\\
    &=(2\alpha_n+1)^2-8 \sum_{i=1}^{n-1} {(\alpha_n-\alpha_i)x_i}\label{eq:2.2f}
\end{align}
\end{subequations}
\begin{subequations}
\begin{align}
    \sum_{i=1}^{n-1} {(\alpha_n-\alpha_i)x_i}
    &=\frac{(2\alpha_n+1)^2-\beta^2}{8}
    &=\frac{(2\alpha_n+1-\beta)(2\alpha_n+1+\beta)}{8}
\end{align}
\end{subequations}
Since $\beta^2$ is odd, it follows that $\beta$ must also be odd. $\sum_{i=1}^{n-1} {(\alpha_n-\alpha_i)x_i}$ is an integer, $\beta\leq (2\alpha_n+1)$ from inequality \ref{eq:2.2f} and $\forall x_i, i\in \{1,2,\dots,n-1\}, x_i\geq 1$. So $\exists l \in \mathbb{N}$,
\begin{displaymath}
    (2\alpha_n+1-\beta)=4l \quad or \quad(2\alpha_n+1+\beta)=4l
\end{displaymath}
\begin{displaymath}
    and \quad 0\leq l\leq (2\alpha_n+1)/2
\end{displaymath}
Substituting any equation that satisfies these conditions, we obtain 
\begin{equation}
(2\alpha_n+1-2l)l=\sum_{i=1}^{n-1} {(\alpha_n-\alpha_i)x_i}. 
\end{equation}
Let $f=l$, then 
\begin{displaymath}
(2\alpha_n+1-2f)f=\sum_{i=1}^{n-1} {(\alpha_n-\alpha_i)x_i}. 
\end{displaymath}

\end{proof}

From Theorem \ref{thm:2.4}, it is easy to observe that this problem can be formulated as a positive linear programming problem. Directly solving it is inherently complex. However, if we consider each element as an element in the positive residue class group modulo $\alpha_n-\alpha_{n-1}$, then the right-hand side of the equation forms a finite set with at most $\alpha_n-\alpha_{n-1}$ elements.

Once certain $x_i$ values are determined, the remaining elements, if unrestricted, can generate a subgroup composed of these remaining elements. Moreover, the minimal generating set of this subgroup can be conveniently identified. Based on this idea, we provide a theoretical proof to support this perspective.

Assume $sol=\{s_i\mid s_i=((2\alpha_n+1-2i)i)\mod{(\alpha_n-\alpha_{n-1})}, i=\in\{0,1, \dots,\alpha_n-\alpha_{n-1}-1\}\}$ and $grp=\{g_i\mid g_i=(\alpha_n-\alpha_i)\mod{(\alpha_n-\alpha_{n-1})}, i=\in\{1,2,\dots,n-1\}\}$.
% (\text{if }g_j=0\text{, use }\alpha_n-\alpha_{n-1}\text{ instead of }g_j),\\ m_{n-1}=0

\begin{theorem}\label{thm:2.5}
If $\exists f\in \mathbb{N},f\leq (2\alpha_n+1)/2$, $(2\alpha_n+1-2f)f=\sum_{i=1}^{n-1} {(\alpha_n-\alpha_i)x_i}$, then $\sum_{i=1}^{n-1} {g_i x_i}\mod{(\alpha_n-\alpha_{n-1})} \in sol$.
\end{theorem}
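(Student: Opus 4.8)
The plan is to reduce the hypothesis modulo $m := \alpha_n - \alpha_{n-1}$ and to recognise the right-hand side as a value of the very quadratic map that generates $sol$. First I would record the elementary congruence $g_i = (\alpha_n-\alpha_i)\bmod m \equiv \alpha_n - \alpha_i \pmod m$, which holds termwise by the definition of $grp$, and then multiply by $x_i$ and sum to obtain
\[
\sum_{i=1}^{n-1} g_i x_i \equiv \sum_{i=1}^{n-1}(\alpha_n-\alpha_i)x_i \pmod{m}.
\]
Invoking the hypothesis $(2\alpha_n+1-2f)f = \sum_{i=1}^{n-1}(\alpha_n-\alpha_i)x_i$ then identifies the right-hand side with $\phi(f)$, where $\phi(t) := (2\alpha_n+1-2t)t$ is exactly the expression appearing in the definition of the elements $s_i$ of $sol$.

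The heart of the argument is to show that $\phi$ is periodic modulo $m$, in the sense that $\phi(t+m) \equiv \phi(t) \pmod m$ for every integer $t$. I would verify this by a direct expansion: writing $\phi(t) = (2\alpha_n+1)t - 2t^2$, the linear part of $\phi(t+m)-\phi(t)$ contributes $(2\alpha_n+1)m$ while the quadratic part contributes $-2\bigl[(t+m)^2 - t^2\bigr] = -2m(2t+m)$, so that
\[
\phi(t+m)-\phi(t) = m\bigl[(2\alpha_n+1) - 4t - 2m\bigr],
\]
which is a multiple of $m$. This is the only step that is not a formal manipulation of congruences, since the quadratic term in $\phi$ does not obviously respect the modulus; the point is precisely that its cross term and square term combine into a multiple of $m$.

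With periodicity in hand, I would set $j := f \bmod m$, so that $j \in \{0,1,\dots,m-1\}$ is a legitimate index in the definition of $sol$. Periodicity yields $\phi(f) \equiv \phi(j) \pmod m$, and chaining the congruences gives
\[
\Bigl(\sum_{i=1}^{n-1} g_i x_i\Bigr) \bmod m = \phi(f) \bmod m = \phi(j) \bmod m = s_j \in sol,
\]
which is the desired conclusion. The main obstacle is therefore the periodicity verification of the middle paragraph; once it is established the result follows by a short chain of congruences. I would finally check the degenerate endpoint $f = 0$ (which yields $s_0 = 0$) and note that the constraint $f \leq (2\alpha_n+1)/2$ is used only to guarantee that $f$ is a well-defined nonnegative integer whose residue modulo $m$ we are entitled to take.
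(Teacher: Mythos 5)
Your proof is correct and follows essentially the same route as the paper: reduce the sum termwise modulo $m = \alpha_n - \alpha_{n-1}$, identify it with $(2\alpha_n+1-2f)f \bmod m$ via the hypothesis, and conclude membership in $sol$. Your version is in fact slightly more complete, since the paper merely asserts that $\bigl((2\alpha_n+1-2f)f\bigr) \bmod m \in sol$, whereas you justify this assertion (which genuinely needs an argument when $f \geq m$) by verifying that $\phi(t) = (2\alpha_n+1-2t)t$ is periodic modulo $m$ and exhibiting the index $j = f \bmod m$.
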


\begin{proof}
\begin{equation}
    ((2\alpha_n+1-2f)f)\mod(\alpha_n-\alpha_{n-1})=\gamma
\end{equation}
\begin{subequations}
\begin{align}
\gamma&=(\sum_{i=1}^{n-1} {(\alpha_n-\alpha_i)x_i})\mod(\alpha_n-\alpha_{n-1})\\
&=(\sum_{i=1}^{n-1} {((\alpha_n-\alpha_i)\mod(\alpha_n-\alpha_{n-1}))x_i})\mod(\alpha_n-\alpha_{n-1})\\
&=(\sum_{i=1}^{n-1} {g_i x_i})\mod(\alpha_n-\alpha_{n-1})
\end{align}
\end{subequations}

Because of $((2\alpha_n+1-2f)f)\mod(\alpha_n-\alpha_{n-1})\in sol$, $\sum_{i=1}^{n-1} {g_i x_i}\mod{(\alpha_n-\alpha_{n-1})} \in sol$.
\end{proof}

%%%%%%%%%%%%%%%%%%%%%%%%%%%%%%%%%%%%%%%%%%%%%%%%%%%%%%%%%%%%theorem
\section{Benefits of Necessary and Sufficient Condition}
\label{sec:benefit1}

The necessary and sufficient condition classifies the combinations of \( x_i \) for all \( i < k \) into equivalence classes that depend only on two variables, \( p_{k-1} \) and \( q_{k-1} \). By leveraging the integrality of these two variables and the fact that the combinatorial structure is independent of \( n \), we can design a polynomial-time algorithm based on dynamic programming. Additionally, considering that the number of valid sequences of a tournament with $n$ vertices, as given in \cite{Kleitman1970}, is \( O(4^n / n^{5/2}) \) and experimental results show that the values of \( p_{k-1} \) and \( q_{k-1} \) reachable from previous \( x_i \) combinations are highly sparse—far fewer than the total number of possible combinations—the algorithm based on this necessary and sufficient condition can terminate efficiently when using an appropriate storage structure. Moreover, this condition provides a criterion with $O(n)$ complexity for quantitatively evaluating whether the current choice of $x_k$ has the potential to be part of a valid score sequence.

By setting \( x_i = 1 \) for all \( i \geq k+1 \) and utilizing the known values \( p_{k-1} \) and \( q_{k-1} \), we derive \( n-k \) inequalities that depend solely on \( x_k \):  
\[
1 \leq \frac{(2(\alpha_i - P_i) + 1) + \sqrt{(2(\alpha_i - P_i) + 1)^2 + 8(Q_i - P_i(P_i - 1)/2)}}{2},
\]
where  
\[
P_i = p_{k-1} + x_k + i - (k+1), \quad Q_i = q_{k-1} + x_k \cdot \alpha_k + \sum_{j=k+1}^{i-1} \alpha_j.
\]

This approach allows us to refine the bounds for \( x_k \) by solving a set of one-variable inequalities.  

It is important to emphasize that not every \( x_k \) within the range specified by Theorem \ref{thm:2.3} can necessarily be part of a valid score sequence. If \( x_k \) fails to satisfy the newly derived constraints, then despite meeting the conditions in Theorem \ref{thm:2.3}, there will exist at least one subsequent \( x_i \) (for \( i \geq k+1 \)) such that no assignment of the remaining \( x_j \) (for \( j \geq k+1, j \neq i \)) can fulfill Theorem \ref{thm:2.3}. Consequently, constructing a valid score sequence would become impossible.  

By leveraging this refinement, we can further narrow the feasible choices for \( x_k \) and identify those that are more likely to contribute to a valid score sequence.

To illustrate this phenomenon, consider the tournament with degree set \( D = \{2, 4, 7, 14\} \). From this set \( D \), we aim to reconstruct an exponent set \( E \) that satisfies Theorem \ref{thm:2.3}. When evaluating \( x_1 \), the formula derived earlier gives the range \( 1 \leq x_1 \leq 5 \). However, if \( x_1 = 5 \), the formula implies \( x_2 \leq 0 \), which is clearly invalid since \( x_2 \) must be greater than $1$.  

Furthermore, as shown in the table below, there is no valid score sequence where \( x_1 \geq 4 \). (Table \ref{tab:degree_set} all possible exponent sets \( E \) for the degree set \( D = \{2, 4, 7, 14\} \).)

\begin{table*}[!ht]%
\centering %
\caption{Restored Score Sequences\label{tab:degree_set}}%
\begin{tabular*}{\textwidth}{@{\extracolsep\fill}ll@{\extracolsep\fill}}
\toprule
\textbf{Scores} & \textbf{Exponents} \\
\midrule
$\{2,4,7,14\}$ & $\{2,1,10,3\}$ \\ \hline

$\{2,4,7,14\}$ & $\{1,2,10,4\}$ \\ \hline

$\{2,4,7,14\}$ & $\{3,1,8,5\}$ \\ \hline

$\{2,4,7,14\}$ & $\{1,1,11,5\}$ \\ \hline

$\{2,4,7,14\}$ & $\{2, 4, 5, 7\}$ \\ \hline

$\{2,4,7,14\}$ & $\{1, 2, 9, 7\}$ \\ \hline

$\{2,4,7,14\}$ & $\{3, 1, 7, 8\}$ \\ \hline

$\{2,4,7,14\}$ & $\{2, 5, 3, 9\}$ \\ \hline

$\{2,4,7,14\}$ & $\{2, 1, 8, 9\}$ \\ \hline

$\{2,4,7,14\}$ & $\{1, 5, 4, 10\}$ \\ \hline

$\{2,4,7,14\}$ & $\{3, 4, 2, 11\}$ \\ \hline

$\{2,4,7,14\}$ & $\{1, 3, 6, 11\}$ \\ \hline

$\{2,4,7,14\}$ & $\{3, 2, 4, 12\}$ \\ \hline

$\{2,4,7,14\}$ & $\{1, 3, 5, 13\}$ \\ \hline

$\{2,4,7,14\}$ & $\{3, 2, 3, 14\}$ \\ \hline

$\{2,4,7,14\}$ & $\{2, 1, 5, 15\}$ \\ \hline

$\{2,4,7,14\}$ & $\{1, 5, 1, 16\}$ \\ \hline

$\{2,4,7,14\}$ & $\{1, 2, 4, 17\}$ \\ \hline

$\{2,4,7,14\}$ & $\{3, 1, 2, 18\}$ \\ \hline

$\{2,4,7,14\}$ & $\{1, 1, 4, 19\}$ \\ \hline

$\{2,4,7,14\}$ & $\{1, 2, 1, 22\}$ \\ \hline

\bottomrule
\end{tabular*}
\end{table*}

\section{Benefits of Necessary Condition}
\label{sec:benefit2}

The necessary condition is established by enforcing that the inequality transitions into an equality when \( k = n \), yielding a computationally convenient discriminant. As the process progresses from \( k = 1 \) to \( n \), this discriminant naturally incorporates group-theoretic principles. By analyzing the minimal generating set, the solvable set, and the elements derived from prior \( x_i \), we can categorize the combinations of \( x_i \) (for \( i < k \)) into equivalence classes, thereby simplifying the computational complexity. Even without explicitly forming these equivalence classes, this method provides an \( O(n) \)-time complexity criterion for evaluating whether a given \( x_k \) has the potential to be part of a valid score sequence.  

The advantage of this condition can be observed in the examples from Table \ref{tab:fast}. For instance, in the fifth row, the reconstructed exponent set necessitates that the middle element be 2 instead of 1. Without identifying promising \( x_i \) at this stage, a straightforward brute-force search would lead to an intractable \( O(10^{26}) \) complexity, rendering the problem unsolvable.

\section{Polynomial-time algorithms}
\label{sec:pol}

In this section, we discuss polynomial-time algorithms based on Theorem \ref{thm:2.3} that reconstructs a score sequence and builds a net structure to explore all possible score sequences derived from a given score set.

\subsection{Theorems and algorithms}

To introduce the subsequent algorithm, a theorem is first proved.

\begin{theorem}\label{thm:3.1}
If a score sequence, characterized by a degree set $D$ and an exponent set $E$,  represents the score sequence of a tournament, then 
\begin{displaymath}
\forall k\in \{1, 2, \dots, n\}, p_k\leq 2\alpha_k+1, q_k\leq \alpha_k(2\alpha_k+1).
\end{displaymath}
\end{theorem}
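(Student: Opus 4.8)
The plan is to bound $q_k$ both from below and from above and then read off both claimed inequalities from the resulting chain. The lower bound is supplied directly by the reformulated Landau condition: since the score sequence comes from a tournament, Theorem \ref{thm:2.2} gives $q_k = \sum_{i=1}^k \alpha_i x_i \geq \frac{p_k(p_k-1)}{2}$ for every $k$. For the matching upper bound I would exploit the fact that the scores are listed in increasing order, so $\alpha_i \leq \alpha_k$ for all $i \leq k$; summing against the multiplicities $x_i$ then yields $q_k = \sum_{i=1}^k \alpha_i x_i \leq \alpha_k \sum_{i=1}^k x_i = \alpha_k p_k$.

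Combining the two bounds gives $\alpha_k p_k \geq q_k \geq \frac{p_k(p_k-1)}{2}$. Since $p_k \geq 1$, I can cancel a factor of $p_k$ to obtain $\alpha_k \geq \frac{p_k-1}{2}$, which rearranges to the first desired inequality $p_k \leq 2\alpha_k + 1$. The second inequality then follows immediately by feeding this back into the upper estimate already established: $q_k \leq \alpha_k p_k \leq \alpha_k(2\alpha_k+1)$. Both claims hold for every $k \in \{1,\dots,n\}$, and the boundary case $k=1$ is covered as well, since $p_1 = x_1 \geq 1$ and $q_1 = \alpha_1 x_1 = \alpha_1 p_1$ so the same cancellation applies.

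The argument carries a transparent combinatorial meaning worth recording alongside the algebra: $p_k$ is exactly the number of vertices whose out-degree is at most $\alpha_k$, and in the sub-tournament they induce the total number of internal wins is $\binom{p_k}{2}$, while each such vertex contributes at most $\alpha_k$ wins, forcing $\binom{p_k}{2} \leq \alpha_k p_k$ --- the same inequality obtained above. I do not anticipate a genuine obstacle here; the only step requiring care is the upper estimate $q_k \leq \alpha_k p_k$, which is what makes the whole chain close and which relies essentially on the ordering convention for $D$. The remaining manipulations are routine, and no input beyond Theorem \ref{thm:2.2} and the monotonicity of the $\alpha_i$ is needed.
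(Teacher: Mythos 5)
Your proof is correct, but it takes a more elementary route than the paper's. The paper derives the bound by reusing its own machinery: it takes the upper bound on $x_k$ from Theorem \ref{thm:2.3}, rewrites the discriminant via the identity established in the proof of Theorem \ref{thm:2.4} as $(2\alpha_k+1)^2 - 8\sum_{i=1}^{k-1}(\alpha_k-\alpha_i)x_i$, observes that this sum is nonnegative so the square root is at most $2\alpha_k+1$, and concludes $p_k = p_{k-1}+x_k \leq 2\alpha_k+1$; the bound on $q_k$ then follows from $\alpha_i \leq \alpha_k$ exactly as in your last step. You bypass the square-root and discriminant manipulations entirely: you sandwich $q_k$ between the Landau lower bound $\tfrac{p_k(p_k-1)}{2}$ of Theorem \ref{thm:2.2} and the trivial upper bound $\alpha_k p_k$, then cancel the factor $p_k \geq 1$. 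Both arguments ultimately rest on the same two facts (Landau's inequality at index $k$ and monotonicity of the $\alpha_i$), so they are equivalent in substance, but yours is self-contained, shorter, and comes with a transparent combinatorial reading (the induced sub-tournament on the $p_k$ lowest-scoring vertices already accounts for $\binom{p_k}{2}$ wins, each vertex supplying at most $\alpha_k$ of them), whereas the paper's version has the expository advantage of exhibiting the bound as a direct by-product of the quadratic-root formula that its algorithms in Section \ref{sec:pol} actually compute with.
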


\begin{proof}
    It can be observed that in Theorem \ref{thm:2.4}, when $k = n $, the derivation process of inequality \ref{eq:2.2f} can be applied to any $k \in \{1, 2, \dots, n\}$. Therefore, $\forall k \in \{1, 2, \dots, n\}$, 
\begin{displaymath}
1 \leq x_k \leq \frac{(2(\alpha_k-p_{k-1})+1)+\sqrt{(2\alpha_k+1)^2-8 \sum_{i=1}^{k-1} {(\alpha_k-\alpha_i)x_i}}}{2}
\end{displaymath}
\begin{displaymath}
x_k \leq \frac{(2(\alpha_k-p_{k-1})+1)+\sqrt{(2\alpha_k+1)^2-8 \sum_{i=1}^{k-1} {(\alpha_k-\alpha_i)x_i}}}{2}
\end{displaymath}
\begin{displaymath}
x_k + p_{k-1} \leq \frac{(2\alpha_k+1)+\sqrt{(2\alpha_k+1)^2-8 \sum_{i=1}^{k-1} {(\alpha_k-\alpha_i)x_i}}}{2}
\end{displaymath}
\begin{displaymath}
p_k \leq \frac{(2\alpha_k+1)+\sqrt{(2\alpha_k+1)^2-8 \sum_{i=1}^{k-1} {(\alpha_k-\alpha_i)x_i}}}{2}
\end{displaymath}
Since $\sqrt{(2\alpha_k+1)^2-8 \sum_{i=1}^{k-1} {(\alpha_k-\alpha_i)x_i}}\leq (2\alpha_k+1)$, it follows that $p_k \leq 2\alpha_k+1$. Furthermore, since $\forall i\in\{1,2,\dots,k\}, \alpha_i\leq \alpha_k$, we obtain $q_k\leq \alpha_k(2\alpha_k+1)$.
\end{proof}

From Theorem \ref{thm:2.3}, it follows that $\forall k \in \{1, 2, \dots, n\} $, the range of $x_k$ is influenced only by two variables, $p_{k-1}$ and $q_{k-1}$. Additionally, the ranges of $p_{k-1}$ and $q_{k-1}$ are limited by Theorem \ref{thm:3.1}. Therefore, by finding all possible combinations for $p_{k-1},q_{k-1}$, we can determine the exact range of $x_k$ and whether inequality \ref{eq:2.1} holds as an equality when $k=n$.

Let's begin the introduction of polynomial-time algorithm that builds a net(termed as such because, in the algorithm's flow, each potentially correct $x_i$ identified at step i is linked to some compatible candidates $x_{i-1}$ and $x_{i+1}$)  for finding all possible score sequences. With a few adjustments, this algorithm can also be used to reconstruct a score sequence in polynomial time, which will be highlighted later.

Here, we introduce the variables that will be used in the following algorithm. The variables $presize$ and $preval$ correspond to $p_{k-1}$ and $q_{k-1}$, respectively, for a given $k$. And the variable $num\_now$ represents the number of $\alpha_k$ needed, given the combination of $p_k$ and $q_k$, along with the corresponding $presize$ and $preval$. The subscripts of $size$ and $val$ at step $k$ correspond to the values of $p_k$ and $q_k$, respectively. That is, each subscript of $size$ and $val$ is associated with the values $p_k$ and $q_k$. The variable $L_k$ stores all possible combinations of $size$ and $val$ for step $k$, i.e., the combinations of $p_k$ and $q_k$.

\noindent To facilitate efficient storage and retrieval, the variable $L_k$ is organized as a nested hierarchical structure rather than a flat list of pairs. Specifically, for a fixed step $k$, we define $\text{size}_{p_k}$ as a collection that maps each valid $p_k$ to a set of associated values $\text{val}_{q_k}$. Each $\text{val}_{q_k}$, in turn, acts as a container for a set of triples $(p_{k-1}, q_{k-1}, x_k)$, representing the specific historical paths—consisting of the previous state and the current increment—that result in the combination $(p_k, q_k)$. This multi-layered nesting of sets explains why $L_k$ is represented as a complex series of nested braces in the algorithm:
\[
L_k = \left\{ p_k \rightarrow \left\{ q_k \rightarrow \left\{ (p_{k-1}, q_{k-1}, x_k)_1, (p_{k-1}, q_{k-1}, x_k)_2, \dots \right\} \right\} \right\}
\]

\noindent Furthermore, this structure naturally forms a \textbf{net}. If we treat each unique pair $(p_k, q_k)$ as a node at stage $k$, the stored triples establish directed edges connecting these nodes to their valid predecessors at stage $k-1$ and their successors at stage $k+1$. Because multiple paths can converge on or diverge from a single $(p, q)$ pair, these overlapping connections across the computational steps constitute the "net" mentioned previously.

\begin{algorithm}
\caption{The first part of building net}
\label{alg:buildnet1}
\begin{algorithmic}
\State{Define $D:=\{ \alpha_1,\ldots,\alpha_n\}$}
\State{Define $E:=\{ x_1,\ldots,x_n\}$}
\State{Define $\alpha_0:=0$}
\State{Define $\beta:=\{ \{presize\},\{preval\},\{num\_now\}$\}}
\State{Define $0\leq k\leq n$}
\State{Define $val_k:=\emptyset$}
\State{Define $size_k:=\{ \{val_0\},\ldots,\{val_{(2\alpha_k+1)\alpha_k}\}$\}}
\State{Define $L_k:=\{ \{size_0\},\ldots,\{size_{2\alpha_k+1)}\}$\}}
\State{Define $L_0.size_0.val_0:=\{ \{\{\{0,0,0\}\}\}$\}}
\State{Define $layer:=1$}
\While{$layer < n$}
\State{Define $i2:=1$}
\While{$i2\leq 2\alpha_{layer}+1$}
\State{Define $i3:=0$}
\While{$i3\leq 2\alpha_{layer-1}+1$}
\State{Define $i4:=0$}
\While{$i4\leq (2\alpha_{layer-1}+1)\alpha_{layer-1}$}
\If{$L_{layer-1}.size_{i3}.val_{i4} \neq \emptyset$}
\State{Define $h:=2(\alpha_{layer}-i3)+1$}
\If{$i2\leq \frac{h+\sqrt{h^2+8*(i4-i3(i3-1)/2)}}{2}$}
    \State{Define $\beta:=\{ i3,i4,i2\}$}
    \State{Update $L_{layer}.size_{i3+i2}.val_{i4+i2\cdot\alpha_{layer}}.insert( \{\beta\})$}
    \Comment{If we need to reconstruct score sequence, then if $L_{layer}.size_{i3+i2}.val_{i4+i2\cdot\alpha_{layer}} \neq \emptyset$, don't add $\beta$.}
\EndIf
\EndIf
\State{Update $i4 = i4 + 1$}
\EndWhile
\State{Update $i3 = i3 + 1$}
\EndWhile
\State{Update $i2=i2+1$}
\EndWhile
\State{Update $layer=layer+1$}
\EndWhile
\end{algorithmic}
\end{algorithm}
\clearpage

\addtocounter{algorithm}{-1}
\begin{algorithm}
\caption{The second part of building net}
\label{alg:buildnet2}
\begin{algorithmic}
\For{$i_3$ \textbf{from} $0$ \textbf{to} $2\alpha_{n-1} + 1$}
    \For{$i4$ \textbf{from} $0$ \textbf{to} $(2 \alpha_{n-1}+1)\alpha_{n-1}$}
        \If{$L_{n-1}.size_{i3}.val_{i4} \neq \emptyset$}
            \State{Define $h:=2(\alpha_n-i3)+1$}
            \State{Define $tmp:=\frac{h+\sqrt{h^2+8(i4-i3(i3-1)/2)}}{2}$}
            \If{$tmp \in \mathbb{N}$ \textbf{and} $tmp\geq 1$}
                \State{Define $\beta:=\{ i3,i4,tmp\}$}
                \State{Update $L_{n}.size_{i3+tmp}.val_{i4+tmp\cdot\alpha_n}.insert( \{\beta\})$}
                \Comment{If we need to reconstruct score sequence, then if $L_{n}.size_{i3+tmp}.val_{i4+tmp\cdot\alpha_{n}} \neq \emptyset$, don't add $\beta$.}
            \EndIf
        \EndIf
    \EndFor
\EndFor

\Return $L$
\end{algorithmic}
\end{algorithm}

After the computation is complete, if there exists correct score sequence, then one correct score sequence or all correct score sequences for given score set can be found by searching through the non-empty parts of $L_i.size_j.val_k$, $i$ from $n$ to $1$. Here, we show the algorithm \text{find\_all} \ref{alg:find_all} for finding all correct score sequences through the constructed net. Additionally, $cursize$ and $curval$ correspond to $p_{layer}$ and $q_{layer}$, respectively.

\addtocounter{algorithm}{-1}
\begin{algorithm}
\caption{Part for find all correct score sequences}
\label{alg:find_all}
\begin{algorithmic}
\State{Function \text{find\_all\_part}($layer,cursize,curval$)}
\If{$layer < 1$}
\Return
\Comment{At this point, all $x_i$ in $E$ have been assigned valid values; $\{E, D\}$ forms a correct score sequence.}
\EndIf
\For{each $\beta \in L_{layer}.size_{cursize}.val_{curval}$}
    \State{Update $x_{layer} := \beta.num\_now$}
    \State{$\text{find\_all\_part}(layer-1,\beta.presize,\beta.preval)$}
\EndFor

\Return

\State{Function \text{find\_all}()}
\For{$i3$ \textbf{from} $0$ \textbf{to} $2\alpha_{n}+1$}
    \For{$i4$ \textbf{from} $0$ \textbf{to} $(2\alpha_{n}+1)\alpha_{n}$}
        \If{$L_{n}.size_{i3}.val_{i4} \neq \emptyset$}
            \State{\text{find\_all\_part}($n$, $i3$, $i4$)}
        \EndIf
    \EndFor
\EndFor

\Return
\end{algorithmic}
\end{algorithm}

\begin{theorem}\label{thm:3.2}
This algorithm will complete in $O(n\alpha_n^4)$ time and requires at most $O(n\alpha_n^4)$ space. For reconstructing score sequence, it requires at most $O(n\alpha_n^3)$ space.
\end{theorem}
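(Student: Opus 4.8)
The plan is to bound the running time by counting the iterations of the nested loops in Algorithms~\ref{alg:buildnet1} and~\ref{alg:buildnet2}, and to bound the space by counting the triples that the net structure $L_k$ can accumulate. Both counts rest on Theorem~\ref{thm:3.1}, which confines every reachable state to $p_k \le 2\alpha_k + 1 = O(\alpha_n)$ and $q_k \le \alpha_k(2\alpha_k+1) = O(\alpha_n^2)$; these ranges fix the sizes of the index sets over which the loops and the containers run.

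For the time, I would observe that the construction is driven by four nested loops. The loop on $layer$ executes $O(n)$ times; the loop on $i2$ ranges up to $2\alpha_{layer}+1 = O(\alpha_n)$; the loop on $i3$ ranges up to $2\alpha_{layer-1}+1 = O(\alpha_n)$; and the loop on $i4$ ranges up to $(2\alpha_{layer-1}+1)\alpha_{layer-1} = O(\alpha_n^2)$. With $L_k$ implemented as arrays indexed by $p_k$ and $q_k$, the cell lookup $L_{layer}.size_{\cdot}.val_{\cdot}$, the discriminant test, and the $insert$ are each $O(1)$, so the innermost body is $O(1)$ and the four loops multiply to $O(n)\cdot O(\alpha_n)\cdot O(\alpha_n)\cdot O(\alpha_n^2) = O(n\alpha_n^4)$. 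The finalization in Algorithm~\ref{alg:buildnet2} runs only $O(\alpha_n^3)$ times and is absorbed, giving the stated time bound.

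For the space, the container skeleton $L_k$ reserves $O(\alpha_n)$ values of $p_k$ times $O(\alpha_n^2)$ values of $q_k$ per layer, hence $O(n\alpha_n^3)$ cells in total; the remaining cost is the triples stored in these cells. In the enumeration version each loop iteration inserts at most one triple, so the number of triples is bounded by the iteration count $O(n\alpha_n^4)$, which dominates the skeleton and yields $O(n\alpha_n^4)$ space. In the reconstruction version the commented guard inserts into a cell only while it is still empty, so each of the $O(n\alpha_n^3)$ cells holds at most one triple; the triples then contribute only $O(n\alpha_n^3)$, matching the skeleton, for a total of $O(n\alpha_n^3)$.

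The step I expect to be the main obstacle is the justification that every container operation is genuinely $O(1)$, since a careless (e.g.\ list-based) representation of $L_k$ would inflate both the per-step cost and the occupancy accounting. I would settle this by committing to direct array indexing on $p_k$ and $q_k$ for cell access and emptiness testing, and by noting that the guard in the reconstruction variant caps each cell at a single triple, which is exactly what separates the $O(n\alpha_n^3)$ space of reconstruction from the $O(n\alpha_n^4)$ space of full enumeration.
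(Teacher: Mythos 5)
Your proposal is correct and follows essentially the same route as the paper's proof: both bound the time by multiplying the four nested loop ranges ($O(n)\cdot O(\alpha_n)\cdot O(\alpha_n)\cdot O(\alpha_n^2)$), bound the enumeration space by the $O(n\alpha_n^3)$ container skeleton plus the number of stored triples (itself capped by the iteration count $O(n\alpha_n^4)$), and obtain $O(n\alpha_n^3)$ for reconstruction from the one-triple-per-cell guard. Your added remark on $O(1)$ array-indexed cell access merely makes explicit an assumption the paper leaves implicit.
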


\begin{proof}
From the algorithm for building the net, it can be seen that the first part of the algorithm \ref{alg:buildnet1} requires only 4 nested loops. The outermost loop runs $n-1$ times. The maximum number of iterations for the second loop is $2\alpha_{n-1} + 1$, the maximum number of iterations for the third loop is $2\alpha_{n-2} + 1$, and the maximum number of iterations for the fourth loop is $(2\alpha_{n-2} + 1) \cdot \alpha_{n-2}$. The second part of the algorithm \ref{alg:buildnet2} requires only 2 nested loops, where the maximum number of iterations for the first loop is $2\alpha_{n-1} + 1$, and the maximum number of iterations for the second loop is $(2\alpha_{n-1} + 1) \cdot \alpha_{n-1}$. Therefore, the time complexity is 
\begin{subequations}
\begin{align}
    O(\text{algorithm}) 
    &= O((n-1)(2\alpha_{n-1} + 1)(2\alpha_{n-2} + 1)((2\alpha_{n-2} + 1) \alpha_{n-2})+\\&+(2\alpha_{n-1} + 1)((2\alpha_{n-1} + 1)\alpha_{n-1}))\\&=O(n\alpha_n^4)
\end{align}
\end{subequations}

Similarly, from algorithm \ref{alg:buildnet1} and algorithm \ref{alg:buildnet2}, it can be seen that the algorithm for building the net requires at most $n$ 'L' arrays, each 'L' requires at most $2\alpha_n + 1$ 'size' arrays, and each 'size' requires at most $(2\alpha_n + 1)\alpha_n$ 'val' arrays. Since the total number of operations where $\beta$ can be added to $val$ equals the total number of iterations of the algorithm, the total number of $\beta$ is $O(n\alpha_n^4)$. Therefore, the space complexity for building net is: 
\begin{subequations}
\begin{align}
    O(\text{algorithm}) 
    &= O(n(2\alpha_{n} + 1)((2\alpha_{n} + 1) \alpha_{n}))+O(n\alpha_n^4)\\&=O(n\alpha_n^4)
\end{align}
\end{subequations}

Since for the reconstruction of the score sequence, at most one $\beta$ needs to be stored for each 'val', the space complexity for reconstructing score sequence is:

\begin{subequations}
\begin{align}
    O(\text{algorithm}) 
    &= O(n(2\alpha_{n} + 1)((2\alpha_{n} + 1) \alpha_{n})\cdot 1)\\&=O(n\alpha_n^3)
\end{align}
\end{subequations}
\end{proof}

The correctness of the algorithms will now be proved. Before proving the next theorem, let us first prove an auxiliary lemma.

\begin{lemma}\label{lem:3.3}
For any $n\geq 2$ and $n > k\geq 1$, the sequence $X = \{x_i \mid x_i \in E, i \in \{1, \dots, k\}\}$ satisfies the first $k$ inequalities of Theorem \ref{thm:2.3}, if and only if information of sequence $X$ appears in the net built by algorithm \ref{alg:buildnet1}, when the outermost loop completes the $k$-th iteration.
\end{lemma}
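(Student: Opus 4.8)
The plan is to prove the biconditional by induction on $k$, treating it as a single invariant maintained by each pass of the outermost loop. First I would fix the meaning of ``information of sequence $X$ appears in the net'': writing $p_j = \sum_{i=1}^{j} x_i$ and $q_j = \sum_{i=1}^{j} \alpha_i x_i$ for the partial sums determined by $X$, I read this as asserting that the cell $L_k.size_{p_k}.val_{q_k}$ is non-empty and contains a triple $(p_{k-1}, q_{k-1}, x_k)$ whose predecessor entry $(p_{k-1}, q_{k-1})$ in $L_{k-1}$ recursively encodes $(x_1, \dots, x_{k-1})$ in the same sense. With this reading the lemma becomes a statement about the existence of a backward-traceable chain, which is exactly what the insertion rule of algorithm \ref{alg:buildnet1} constructs.

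For the base case $k = 1$ I would compute directly. On the first pass ($layer = 1$) the only non-empty predecessor cell is $L_0.size_0.val_0 = \{(0,0,0)\}$, so $i3 = p_0 = 0$ and $i4 = q_0 = 0$, giving $h = 2\alpha_1 + 1$ and the test $i2 \leq \frac{(2\alpha_1+1)+\sqrt{(2\alpha_1+1)^2}}{2} = 2\alpha_1 + 1$. Since $i2 = x_1$ ranges over $1, \dots, 2\alpha_1 + 1$, the triple $(0,0,x_1)$ is inserted into $L_1.size_{x_1}.val_{x_1 \alpha_1}$ precisely when $1 \leq x_1 \leq 2\alpha_1 + 1$, and this bound coincides with the first inequality of Theorem \ref{thm:2.3} evaluated at $p_0 = q_0 = 0$. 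This settles both directions at $k = 1$.

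For the inductive step ($2 \leq k \leq n-1$) I would argue each direction using the hypothesis at $k-1$. In the forward direction, if $X$ satisfies the first $k$ inequalities then its truncation $(x_1,\dots,x_{k-1})$ satisfies the first $k-1$, so by induction the cell $L_{k-1}.size_{p_{k-1}}.val_{q_{k-1}}$ is non-empty and traces back to $(x_1,\dots,x_{k-1})$. The crucial point is that the $k$-th pass actually visits this cell: here I would invoke Theorem \ref{thm:3.1}, whose derivation applies to any truncation satisfying the first $k-1$ inequalities, to conclude $p_{k-1} \leq 2\alpha_{k-1}+1$ and $q_{k-1} \leq \alpha_{k-1}(2\alpha_{k-1}+1)$, so that $(i3,i4) = (p_{k-1},q_{k-1})$ lies inside the loop ranges, while $x_k \leq p_k \leq 2\alpha_k + 1$ lies within the $i2$-range. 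Substituting $i3 = p_{k-1}$, $i4 = q_{k-1}$, $\alpha_{layer} = \alpha_k$ into the algorithm's test reproduces exactly the upper bound of inequality \ref{eq:2.1} for index $k$ (and $q_{k-1} - p_{k-1}(p_{k-1}-1)/2 \geq 0$, the $(k-1)$-th inequality, guarantees the radicand is real), so the triple $(p_{k-1}, q_{k-1}, x_k)$ is inserted into $L_k.size_{p_{k-1}+x_k}.val_{q_{k-1}+x_k\alpha_k} = L_k.size_{p_k}.val_{q_k}$, which is the required appearance of $X$. For the backward direction I would reverse this: any triple placed in $L_k.size_{p_k}.val_{q_k}$ during the $k$-th pass was inserted only from a non-empty predecessor cell and only when the test held, so the test yields the $k$-th inequality for $X$ while the induction hypothesis applied to the predecessor yields the first $k-1$; the subscript arithmetic $p_k = p_{k-1}+x_k$, $q_k = q_{k-1}+x_k\alpha_k$ guarantees the recovered partial sums are consistent.

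I expect the main obstacle to be the bookkeeping that ties the algorithm's integer loop indices to the real-valued quantities in Theorem \ref{thm:2.3}, specifically certifying that Theorem \ref{thm:3.1} licenses restricting $(i3, i4)$ to the finite ranges $[0, 2\alpha_{k-1}+1]$ and $[0, (2\alpha_{k-1}+1)\alpha_{k-1}]$ without discarding any valid partial sequence; this is the step where the net is \emph{complete} rather than merely sound. A secondary care point is making the recursive definition of ``information appears'' precise enough that the induction transfers cleanly between layers, and confirming that the copy of $L_{k-1}$ read during the $k$-th pass is already the finished product of the $(k-1)$-th pass, which is what the phrase ``when the outermost loop completes the $k$-th iteration'' encodes.
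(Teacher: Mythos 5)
Your proposal is correct and follows essentially the same route as the paper's own proof: induction on $k$, with the base case $k=1$ checked against the algorithm's test at $layer=1$, and the inductive step using Theorem \ref{thm:3.1} to guarantee that $(p_{k-1},q_{k-1})$ and $x_k$ fall within the loop ranges so that the insertion test (which is exactly inequality \ref{eq:2.1}) fires. Your version is in fact somewhat tighter than the paper's --- you make explicit what ``information appears in the net'' means and verify the range bookkeeping directly rather than by contradiction --- but these are refinements of the same argument, not a different one.
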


\begin{proof}
Let's first use mathematical induction to prove this lemma.

Base case: For $k=1$, it is easy to observe from the algorithm's execution algorithm \ref{alg:buildnet1} that the algorithm will sequentially try all $x_1$ that satisfy the inequality $1 \leq x_1 \leq 2\alpha_1 + 1$, and will insert those $x_1$ that satisfy the first inequality of Theorem \ref{thm:2.3} into the 'L' array. According to Theorem \ref{thm:3.1}, the first $x_1$ that satisfies Theorem \ref{thm:2.3} must also satisfy the inequality $x_1 \leq 2\alpha_1 + 1$, so the base case is proven.

Inductive step: Assume that Lemma \ref{lem:3.3} holds for $k-1$. We will now verify that Lemma\ref{lem:3.3} holds for $k$. Assume that after algorithm \ref{alg:buildnet1} completes the step ($layer=k$) in outermost loop, there exists a sequence $X$ that satisfies the first $k$ inequalities of Theorem \ref{thm:2.3} but was not found by algorithm \ref{alg:buildnet1}. Removing $x_k$ from this sequence, by the assumption for $k-1$, the information of the newly obtained sequence $X$ is included in the net constructed after step $k-1$ of the outermost loop. The removed $x_k$ satisfies the inequality $x_k \leq 2\alpha_k + 1$ and inequality \ref{eq:2.1}. Therefore, the information of $x_k$, along with $p_{k-1}=\sum_{i=1}^{k-1} {x_i}$ of sequence $X$ and $q_{k-1}=\sum_{i=1}^{k-1} {\alpha_i x_i}$ from sequence $X$, will be stored in $L_{k}.size_{p_{k-1}+x_k}.val_{q_{k-1}+x_k\alpha_{k}}$. This leads to a contradiction. Additionally, according to the workflow of algorithm \ref{alg:buildnet1}, only the sequence $X$ that satisfies the first $k$ inequalities of Theorem \ref{thm:2.3} will be added to the net built by algorithm \ref{alg:buildnet1} after the first $k$ iterations of the outermost loop. Therefore, Lemma \ref{lem:3.3} holds for $k$.
\end{proof}

\begin{theorem}\label{thm:3.4}
A score sequence $E=\{x_i\mid x_i\in\mathbb{N},i=\in \{1,\dots,n\}\}$ is a correct score sequence if and only if the information of the score sequence $E$ appears in the net built by algorithm \ref{alg:buildnet1} and algorithm \ref{alg:buildnet2}.
\end{theorem}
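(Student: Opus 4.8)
The plan is to prove the biconditional by combining Lemma \ref{lem:3.3}, which already governs the first $n-1$ layers produced by Algorithm \ref{alg:buildnet1}, with a direct analysis of the final layer produced by Algorithm \ref{alg:buildnet2}. Throughout I would use the characterization of a correct score sequence supplied by Theorems \ref{thm:2.2} and \ref{thm:2.3}: together with $D$, the sequence $E$ is a correct score sequence if and only if it satisfies all $n$ inequalities of Theorem \ref{thm:2.3}, with the right-hand bound of inequality \ref{eq:2.1} attained as an equality at $k=n$. The observation that ties the two algorithms to this theorem is that Algorithm \ref{alg:buildnet2} inserts a final increment $tmp$ exactly when $tmp$ equals that upper bound for the recorded $(p_{n-1},q_{n-1})$ and is a positive integer, which is precisely the equality case at $k=n$.

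For the forward direction I would assume $E$ is a correct score sequence. By Theorem \ref{thm:2.3} its prefix $\{x_1,\dots,x_{n-1}\}$ satisfies the first $n-1$ inequalities, so by Lemma \ref{lem:3.3} (applied with $k=n-1$, which is legitimate since $n\geq 2$) the information of this prefix appears in the net once the outer loop of Algorithm \ref{alg:buildnet1} finishes its $(n-1)$-th iteration; in particular the cell $L_{n-1}.size_{p_{n-1}}.val_{q_{n-1}}$ is non-empty and records the corresponding triple. Since equality holds at $k=n$, $x_n$ equals the upper bound of inequality \ref{eq:2.1}, so with $h=2(\alpha_n-p_{n-1})+1$ the quantity $tmp$ computed in Algorithm \ref{alg:buildnet2} satisfies $tmp=x_n\in\mathbb{N}$ and $tmp\geq 1$. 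Hence Algorithm \ref{alg:buildnet2} inserts the triple $(p_{n-1},q_{n-1},x_n)$ into $L_n.size_{p_{n-1}+x_n}.val_{q_{n-1}+x_n\alpha_n}$, completing a predecessor chain that encodes all of $E$.

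For the reverse direction I would start from the final inserted triple at layer $n$ and follow the predecessor pointers back through the net. The triple stored by Algorithm \ref{alg:buildnet2} certifies that $x_n$ equals the upper bound of inequality \ref{eq:2.1} for the recorded $(p_{n-1},q_{n-1})$, i.e.\ the $n$-th condition holds with equality. Because this triple can only sit on top of a non-empty cell $L_{n-1}.size_{p_{n-1}}.val_{q_{n-1}}$, Lemma \ref{lem:3.3} guarantees that the prefix $\{x_1,\dots,x_{n-1}\}$ read off the chain satisfies the first $n-1$ inequalities of Theorem \ref{thm:2.3}, and that the non-negativity requirement $q_{k-1}-p_{k-1}(p_{k-1}-1)/2\geq 0$ holds at every stage. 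Collecting these facts, $E$ satisfies every condition of Theorem \ref{thm:2.3} including equality at $k=n$, so by Theorems \ref{thm:2.3} and \ref{thm:2.2} it is a correct score sequence.

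The main obstacle I anticipate is bookkeeping at the seam between the two algorithms rather than any deep inequality. Since Lemma \ref{lem:3.3} is stated only for $k<n$, I must argue carefully that the net state handed to Algorithm \ref{alg:buildnet2} is exactly the one for which the lemma guarantees a faithful encoding of the $(n-1)$-prefix, and that the phrase ``information of $E$ appears in the net'' is to be read as the existence of a consistent predecessor chain from layer $n$ down to layer $1$ that reconstructs $E$. I would also make explicit that Algorithm \ref{alg:buildnet2}'s test $tmp\in\mathbb{N}$ is the precise algorithmic translation of the equality-at-$k=n$ clause of Theorem \ref{thm:2.3}, since the upper bound is an integer exactly when the discriminant $\triangle$ from equation \ref{eq:del} is a perfect square yielding an integral half-sum; verifying this correspondence is the one place where a short computation is unavoidable.
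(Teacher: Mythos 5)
Your proposal is correct and follows essentially the same route as the paper: both directions rest on Lemma \ref{lem:3.3} handling the $(n-1)$-prefix built by algorithm \ref{alg:buildnet1}, with the final element $x_n$ handled by observing that algorithm \ref{alg:buildnet2}'s test ($tmp\in\mathbb{N}$, $tmp\geq 1$) is exactly the equality case of inequality \ref{eq:2.1} at $k=n$. Your reverse direction is somewhat more explicit than the paper's (which merely appeals to ``the workflow'' of the two algorithms), but it is the same argument, using the biconditional of Lemma \ref{lem:3.3} to read the prefix inequalities off the predecessor chain.
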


\begin{proof}

First, we prove the necessity. For any correct score sequence $E$, according to Theorem \ref{thm:2.3}, we have
\begin{equation}
x_n =\frac{(2(\alpha_n-p_{n-1})+1)+\sqrt{(2(\alpha_n-p_{n-1})+1)^2+8(q_{n-1}-p_{n-1}(p_{n-1}-1)/2)}}{2},  
\end{equation}
$x_n\geq 1$, and $x_n$ is an integer. After removing $x_n$, we obtain a sequence $X$ that satisfies Lemma \ref{lem:3.3}. Therefore, the information of sequence $X$ must appear in the net constructed by algorithm \ref{alg:buildnet1}. According to the second part of the algorithm \ref{alg:buildnet2}, $p_{n-1}=\sum_{i=1}^{n-1} {x_i}$ and $q_{n-1}=\sum_{i=1}^{n-1} {\alpha_i x_i}$ of sequence $X$ must participate in the computation, since $L_{n-1}.size_{p_{n-1}}.val_{q_{n-1}}\neq \emptyset$. Therefore, the information of $x_n$ must be stored in the final net.

Now, we prove the sufficiency. If sequence $E$ appears in the constructed net, then according to the workflow of algorithm \ref{alg:buildnet1} and algorithm \ref{alg:buildnet2}, sequence $E$ must satisfy Theorem \ref{thm:2.3}, so sequence $E$ is correct score sequence.
\end{proof}

Now we prove the correctness of algorithm \ref{alg:buildnet1} and algorithm \ref{alg:buildnet2} for reconstructing the score sequence.

\begin{lemma}\label{lem:3.5}
For any $n\geq 2$ and $n > k\geq 1$, a combination of $size_{p_k}$ and $val_{q_k}$ in $L_k$ for which there exists a score sequence $X = \{x_i \mid x_i \in E, i \in \{1, \dots, k\}\}$ such that $\sum_{i=1}^{k} {x_i}=p_k$ and $\sum_{i=1}^{k} {\alpha_i x_i}=q_k$ satisfies the first $k$ inequalities of Theorem \ref{thm:2.3}, if and only if the set $L_k.size_{p_k}.val_{q_k}$ of this combination of $size_{p_k}$ and $val_{q_k}$ in $L_k$ is a non-empty set in the net built by algorithm \ref{alg:buildnet1} for reconstructing score sequence, when the outermost loop completes the $k$-th iteration.
\end{lemma}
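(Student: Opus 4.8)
The plan is to prove this by induction on $k$, mirroring the structure of Lemma~\ref{lem:3.3} but accounting for the one difference between the full net and the reconstruction net: in the reconstruction variant a triple $\beta$ is inserted into $L_k.size_{p_k}.val_{q_k}$ only when that cell is currently empty. The key observation that makes the argument go through is that the forward propagation in algorithm~\ref{alg:buildnet1} is triggered purely by the \emph{non-emptiness} test $L_{layer-1}.size_{i3}.val_{i4}\neq\emptyset$, and never by how many triples a cell contains. Consequently, retaining a single witness per pair $(p_k,q_k)$ suffices to carry reachability from one layer to the next, which is exactly what the lemma asserts: the invariant maintained by the induction concerns which cells are occupied, not their cardinality.

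For the base case $k=1$ I would note that $p_0=q_0=0$, so the algorithm tries every $x_1$ with $1\leq x_1\leq 2\alpha_1+1$ and inserts a triple into $L_1.size_{x_1}.val_{x_1\alpha_1}$ precisely when $x_1$ satisfies the first inequality of Theorem~\ref{thm:2.3}. Because distinct values of $x_1$ yield distinct pairs $(p_1,q_1)=(x_1,x_1\alpha_1)$, no collisions arise, so the reconstruction net and the full net coincide at layer~$1$; hence $L_1.size_{p_1}.val_{q_1}$ is non-empty iff $p_1=x_1$ for some admissible $x_1$, establishing the equivalence.

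For the inductive step I would assume the statement at $k-1$ and argue both directions. For necessity, given a prefix $X=(x_1,\dots,x_k)$ satisfying the first $k$ inequalities with prefix sums $(p_k,q_k)$, the truncation $(x_1,\dots,x_{k-1})$ has prefix sums $(p_{k-1},q_{k-1})$ and satisfies the first $k-1$ inequalities, so by the induction hypothesis $L_{k-1}.size_{p_{k-1}}.val_{q_{k-1}}\neq\emptyset$. In the $k$-th outermost iteration, when the loop indices reach $(i3,i4,i2)=(p_{k-1},q_{k-1},x_k)$, the non-emptiness test passes and $x_k$ satisfies inequality~\ref{eq:2.1} (equivalent to the $k$-th inequality of Theorem~\ref{thm:2.3}), so the algorithm attempts the insertion into $L_k.size_{p_k}.val_{q_k}$; whether or not that cell was already occupied, it ends the iteration non-empty. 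For sufficiency, a non-empty cell $L_k.size_{p_k}.val_{q_k}$ must contain some $\beta=(p_{k-1},q_{k-1},x_k)$ whose insertion required $L_{k-1}.size_{p_{k-1}}.val_{q_{k-1}}\neq\emptyset$ together with $x_k$ satisfying inequality~\ref{eq:2.1}; by the induction hypothesis the predecessor cell corresponds to a valid prefix $(x_1,\dots,x_{k-1})$, and appending $x_k$ yields $X$ whose first $k-1$ inequalities hold by that prefix and whose $k$-th inequality holds by the choice of $x_k$, with the correct prefix sums $(p_k,q_k)=(p_{k-1}+x_k,\,q_{k-1}+x_k\alpha_k)$.

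The step I expect to require the most care is exactly the reconciliation of the pruning rule with the induction, namely confirming that discarding all but one witness per cell never suppresses a cell that is genuinely reachable. This reduces to the remark above, that algorithm~\ref{alg:buildnet1} reads predecessors only through the emptiness predicate, so the pruning changes only the multiplicity of stored triples and not the set of occupied cells at each layer. Making this remark precise, and observing that it is the very invariant the induction propagates, is the heart of the proof; the remaining algebraic verifications coincide with those already carried out for Lemma~\ref{lem:3.3}.
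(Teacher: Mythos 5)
Your proposal is correct and follows essentially the same route as the paper: induction on $k$, with the base case handled by exhausting $x_1 \leq 2\alpha_1+1$ and the inductive step resting on the fact that algorithm \ref{alg:buildnet1} reads predecessor cells only through the emptiness test, so the keep-one-witness pruning preserves exactly which $(p_k,q_k)$ cells are occupied. The only differences are stylistic — the paper phrases the inductive step as a contradiction argument while you argue both directions directly, and you state the occupancy-invariant more explicitly than the paper does.
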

\begin{proof}
We first use mathematical induction to prove this lemma.

Base case: For $k=1$, it follows from the execution of algorithm \ref{alg:buildnet1} for reconstructing the score sequence that the algorithm will sequentially try all $x_1$ that satisfy the inequality $1 \leq x_1 \leq 2\alpha_1 + 1$, and will insert those $x_1$ that satisfy the first inequality of Theorem \ref{thm:2.3} into the $L$ array if the place where it needs to be inserted is an empty set. According to Theorem \ref{thm:3.1}, the first $x_1$ that satisfies Theorem \ref{thm:2.3} must also satisfy the inequality $x_1 \leq 2\alpha_1 + 1$, and all combinations of $p_1$ and $q_1$ are generated by $x_1$, so the base case is proven.

Inductive step: Assume that Lemma \ref{lem:3.5} holds for $k-1$. We will now verify that Lemma \ref{lem:3.5} holds for $k$. Assume that after step $k$ (i.e., $layer=k$) of the outermost loop in algorithm \ref{alg:buildnet1} for reconstructing the score sequence is completed, there exists a combination of $size_{p_k}$ and $val_{q_k}$ in $L_k$ for which there exists a score sequence $X = \{x_i \mid x_i \in E, i = \in \{1,\dots,k\}\}$ such that $\sum_{i=1}^{k} {x_i}=p_k$ and $\sum_{i=1}^{k} {\alpha_i x_i}=q_k$ satisfies the first $k$ inequalities of Theorem \ref{thm:2.3}, but $L_k.size_{p_k}.val_{q_k}$ is an empty set. Assume that $p_{k-1}=\sum_{i=1}^{k-1} {x_i}$ and $q_{k-1}=\sum_{i=1}^{k-1} {\alpha_i x_i}$ for sequence $X$. Removing $x_k$ from this sequence, by the assumption for $k-1$, $L_{k-1}.size_{p_{k-1}}.val_{q_{k-1}}$ is non-empty set in the net constructed after step $k-1$ of the outermost loop. The removed $x_k$ satisfies the inequality $x_k \leq 2\alpha_k + 1$ and inequality \ref{eq:2.1}, so the information of $x_k$, as well as the values $p_{k-1}$ and $q_{k-1}$ of sequence $X$ will be stored in $L_{k}.size_{p_{k-1}+x_k}.val_{q_{k-1}+x_k\alpha_{k}}$ if the place where it needs to be inserted is an empty set. This contradicts our assumption, hence proving the claim. Additionally, according to the workflow of algorithm \ref{alg:buildnet1} for reconstructing score sequence, Only $L_k.size_{p_k}.val_{q_k}$ corresponding to a valid combination of $size_{p_k}$ and $val_{q_k}$ in $L_k$ for which there exists a score sequence $X = \{x_i \mid x_i \in E, i = \in \{1,\dots,k\}\}$ such that $\sum_{i=1}^{k} {x_i}=p_k$ and $\sum_{i=1}^{k} {\alpha_i x_i}=q_k$ satisfies the first $k$ inequalities of Theorem \ref{thm:2.3} will be non-empty set in the net built by algorithm \ref{alg:buildnet1} after the first $k$ iterations of the outermost loop. Therefore, Lemma \ref{lem:3.5} holds for $k$.
\end{proof}

\begin{theorem}\label{thm:3.6}
If there exists a correct score sequence $E$, then algorithm \ref{alg:buildnet1} and algorithm \ref{alg:buildnet2} for reconstructing score sequence will find at least one correct score sequence.
\end{theorem}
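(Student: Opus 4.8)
The plan is to combine the reachability characterization of Lemma \ref{lem:3.5} with a downward-induction argument showing that \emph{any} path recovered by backtracking through the stored triples is itself a correct score sequence, even though it need not coincide with the hypothesized sequence $E$. I would not try to argue that the algorithm reconstructs $E$ verbatim; instead I would show that a non-empty terminal node must exist and that every backtracking chain out of it assembles a sequence satisfying all $n$ conditions of Theorem \ref{thm:2.3}.

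First I would use the hypothesis to locate a non-empty terminal node. Given a correct score sequence $E=\{x_1^*,\ldots,x_n^*\}$, deleting $x_n^*$ leaves a prefix $X^*=\{x_1^*,\ldots,x_{n-1}^*\}$ that satisfies the first $n-1$ inequalities of Theorem \ref{thm:2.3}, since these are a subset of the conditions met by $E$. Writing $p_{n-1}^*=\sum_{i=1}^{n-1}x_i^*$ and $q_{n-1}^*=\sum_{i=1}^{n-1}\alpha_i x_i^*$, Lemma \ref{lem:3.5} guarantees that $L_{n-1}.size_{p_{n-1}^*}.val_{q_{n-1}^*}$ is non-empty in the reconstruction net once the outermost loop completes step $n-1$. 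When algorithm \ref{alg:buildnet2} processes this node it computes $tmp$ as the right-hand bound of inequality \eqref{eq:2.1} at $k=n$; because $E$ is correct, the equality form of Theorem \ref{thm:2.3} forces $x_n^*=tmp$ with $tmp\in\mathbb{N}$ and $tmp\geq 1$, so the exact-integer test passes and a triple is inserted (or is already present) at $L_n.size_{p_{n-1}^*+tmp}.val_{q_{n-1}^*+tmp\cdot\alpha_n}$. Hence $L_n$ contains at least one non-empty node that the reconstruction search will visit.

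Next I would show that backtracking from any non-empty node of $L_n$ yields a valid sequence. The key structural invariant, read directly off algorithms \ref{alg:buildnet1} and \ref{alg:buildnet2}, is that every stored triple $\beta=(presize,preval,num\_now)$ at a node $(p_k,q_k)$ of $L_k$ satisfies $p_k=presize+num\_now$ and $q_k=preval+num\_now\cdot\alpha_k$, has $num\_now$ obeying inequality \eqref{eq:2.1} at step $k$ with $p_{k-1}=presize$ and $q_{k-1}=preval$, and points to a predecessor node $(presize,preval)$ that is itself non-empty. Following the single stored triple at each node, the search descends from $k=n$ to $k=1$ and terminates at the base node $L_0.size_0.val_0=\{(0,0,0)\}$ when $layer<1$. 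Telescoping $presize$ and $preval$ along this chain shows that the recovered values $x_1,\ldots,x_n$ have partial sums $p_k=\sum_{i=1}^k x_i$ and $q_k=\sum_{i=1}^k\alpha_i x_i$, so each $x_k$ satisfies the $k$-th inequality of Theorem \ref{thm:2.3}, with equality at $k=n$ because the terminal triple came from the exact-integer test of algorithm \ref{alg:buildnet2}. By Theorem \ref{thm:2.3} the recovered sequence is therefore a correct score sequence.

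I expect the main obstacle to be conceptual rather than computational: one must resist arguing that the algorithm returns the specific sequence $E$. Because the reconstruction variant keeps only one triple per $(p_k,q_k)$ node, the path taken during backtracking may differ from the one generated by $E$, so the whole weight of the proof rests on the net invariants guaranteeing that every stored triple encodes a legitimate local step and every predecessor pointer lands on a genuinely reachable node. The delicate points to nail down are precisely that non-emptiness of a node always coincides with the presence of at least one usable triple (so backtracking never stalls) and that the chain terminates exactly at $(0,0)$ at layer $0$ (so a full-length sequence is produced); once these are established, correctness of the assembled sequence follows from Theorem \ref{thm:2.3} as above.
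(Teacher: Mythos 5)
Your proposal is correct and follows essentially the same route as the paper's proof: delete $x_n$ from $E$, invoke Lemma \ref{lem:3.5} to get a non-empty node $L_{n-1}.size_{p_{n-1}}.val_{q_{n-1}}$, observe that algorithm \ref{alg:buildnet2} recomputes the same $x_n$ from $(p_{n-1},q_{n-1})$, and then backtrack through the net to assemble a valid sequence. Your explicit treatment of the one-triple-per-node subtlety --- proving via the net invariants that \emph{any} backtracking chain, not necessarily the one induced by $E$, satisfies all conditions of Theorem \ref{thm:2.3} --- is just a more careful rendering of what the paper compresses into ``replacing all stored values $x_1,\dots,x_{n-1}$ with the corresponding information from the net.''
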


\begin{proof}
Suppose we have a correct sequence $E$. By removing $x_n$ from $E$, we obtain a sequence $X = \{x_i \mid x_i \in E, i \in \{1, \dots, k\}\}$. Assume that $\sum_{i=1}^{n-1} {x_i}=p_{n-1}$ and $\sum_{i=1}^{n-1} {\alpha_i x_i}=q_{n-1}$ for sequence $X$. $L_{n-1}.size_{p_{n-1}}.val_{q_{n-1}}$ will be a non-empty set in the net constructed in the first part of algorithm \ref{alg:buildnet1} for reconstructing score sequence (according to Lemma 3.5). Since the values of $p_{n-1}$ and $q_{n-1}$ are the only factors influencing the computation of
\begin{equation}
\frac{(2(\alpha_n-p_{n-1})+1)+\sqrt{(2(\alpha_n-p_{n-1})+1)^2+8(q_{n-1}-p_{n-1}(p_{n-1}-1)/2)}}{2},
\end{equation}
the second part of the algorithm \ref{alg:buildnet2} can compute the same $x_n$ using the $p_{n-1}$ and $q_{n-1}$ from $L_{n-1}.size_{p_{n-1}}.val_{q_{n-1}}$. By replacing all stored values $x_1, \dots, x_{n-1}$ in sequence $X$ with the corresponding information from the net constructed in the first part of the algorithm \ref{alg:buildnet1}, and applying a method similar to the one used in algorithm \ref{alg:find_all}, we can reconstruct a correct sequence reconstructed by algorithm \ref{alg:buildnet1} and algorithm \ref{alg:buildnet2} for reconstructing score sequence.
\end{proof}

\subsection{Experimental results}

Here I will compare the runtime of algorithm \ref{alg:buildnet1} and algorithm \ref{alg:buildnet2} with that of control algorithm \ref{alg:contro} under different scenarios, as shown in Figure \ref{fig:pol_ex_max} and Figure \ref{fig:pol_ex_num}, to verify Theorem \ref{thm:3.2}. Then, seven restored score sequences will be presented in Table \ref{tab:pol}.

\begin{algorithm}
\caption{Control Algorithm}
\label{alg:contro}
\begin{algorithmic}
\State{Define $\beta := 0$}
\For{$i1$ \textbf{from} $1$ \textbf{to} $size\_of\_score\_set$}
    \For{$i2$ \textbf{from} $1$ \textbf{to} $2 \times maximum\_of\_scores + 1$}
        \For{$i3$ \textbf{from} $1$ \textbf{to} $2 \times maximum\_of\_scores + 1$}
            \For{$i4$ \textbf{from} $1$ \textbf{to} $(2 \times maximum\_of\_scores + 1) \times maximum\_of\_scores$}
                \For{$Const := 1$ \textbf{to} 8}
                    \State{Update $\beta := 0$}
                \EndFor
            \EndFor
        \EndFor
    \EndFor
\EndFor

\Return
\end{algorithmic}
\end{algorithm}

Note: All the runtime data presented here were obtained on a standard consumer-grade laptop, a ThinkPad T14 with an i5-10210U processor, using Visual Studio C++.
%\clearpage

\begin{figure}[tbhp]
\centering
\includegraphics[width=0.8\textwidth]{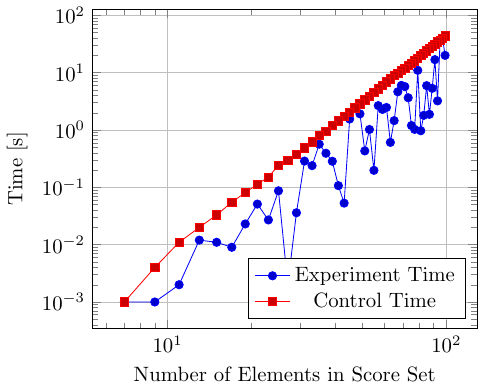}
\caption{Comparison of time measurements as the maximum element in a fixed set of scores (with a size of 7) varies, under experimental and control algorithms.}
\label{fig:pol_ex_max}
\end{figure}

\begin{figure}[tbhp]
\centering
\includegraphics[width=0.8\textwidth]{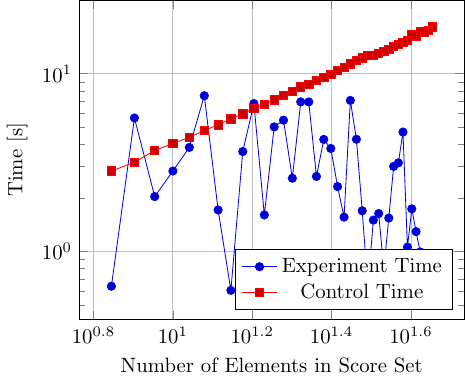}
\caption{Comparison of execution time as the number of elements in a score set varies while keeping the maximum element $\geq 28$, under experimental and control algorithms.}
\label{fig:pol_ex_num}
\end{figure}

\begin{table*}[!t]%
\centering %
\caption{Restored Score Sequences\label{tab:pol}}%
\begin{tabular*}{\textwidth}{@{\extracolsep\fill}ll@{\extracolsep\fill}}
\toprule
\textbf{Scores} & \textbf{Exponents} \\
\midrule
$\{0, 1, 2, 3, 4, 5, 7, 13, 15, 16, 17,$ & $\{1, 1, 1, 1, 1, 1, 3, 5, 2, 2, 1,$ \\
$18, 19, 20, 21, 22, 23, 24, 25, 26, 28\}$ & $1, 1, 1, 1, 1, 1, 1, 1, 1, 1\}$ \\ \hline

$\{0, 1, 2, 3, 4, 5, 6, 7, 8, 9, 12,$ & $\{1, 1, 1, 1, 1, 1, 1, 1, 1, 1, 1,$ \\
$13, 14, 15, 16, 18, 19, 20, 21, 22, 23, 24, 28\}$ & $1, 1, 1, 1, 1, 5, 3, 1, 1, 1, 1, 1\}$ \\ \hline

$\{0, 1, 2, 3, 4, 5, 6, 7, 9, 10, 11, 12, 13,$ & $\{1, 1, 1, 1, 1, 1, 1, 1, 1, 1, 1, 2, 1,$ \\
$14, 17, 18, 19, 20, 21, 22, 23, 25, 26, 27, 28\}$ & $1, 4, 1, 1, 1, 1, 1, 1, 1, 1, 1, 1\}$ \\ \hline

$\{0, 1, 2, 3, 4, 5, 6, 7, 8, 9, 10, 11, 12, 13,$ & $\{1, 1, 1, 1, 1, 1, 1, 1, 1, 1, 1, 1, 1, 1,$ \\
$14, 15, 16, 19, 20, 21, 22, 23, 25, 27, 28, 29, 30\}$ & $1, 1, 1, 1, 1, 4, 2, 1, 1, 1, 1, 1, 1\}$ \\ \hline

$\{1, 2, 4, 9, 10, 12, 13, 14, 15, 17,$ & $\{1, 1, 1, 1, 10, 2, 1, 1, 1, 1,$ \\
$18, 19, 20, 21, 23, 25, 26, 27, 28\}$ & $1, 1, 1, 1, 1, 1, 1, 1, 1\}$ \\ \hline

$\{0, 10, 11, 18, 21, 22, 26\}$ & $\{1, 6, 12, 5, 1, 1, 1\}$ \\ \hline

$\{1, 4, 5, 20, 21, 23, 28\}$ & $\{1, 5, 3, 21, 1, 1, 1\}$ \\ \hline

\bottomrule
\end{tabular*}
\end{table*}

\iffalse
\begin{table}[tbhp]
\footnotesize
\caption{Restored Score Sequences continue}\label{tab:pol2}
\begin{center}
\begin{tabular}{|l|l|} \hline
\bf Scores & \bf Exponents \\ \hline

$\{2, 5, 7, 15, 19, 20, 22\}$ & $\{2, 1, 9, 8, 1, 1, 1\}$ \\ \hline

$\{2, 10, 14, 15, 17, 19, 24\}$ & $\{1, 15, 2, 4, 1, 1, 1\}$ \\ \hline

$\{1, 6, 7, 10, 13, 17, 20\}$ & $\{1, 2, 1, 14, 1, 1, 1\}$ \\ \hline

\end{tabular}
\end{center}
\end{table}
\fi

\clearpage

\section{Fast algorithm for reconstructing score sequence}\label{sec:fast}

\subsection{Theorems and algorithms}

We now introduce the fast algorithm. This algorithm is based on Theorem \ref{thm:2.5}. To present this algorithm, we first prove the following theorem.

Assume $sol=\{s_i\mid s_i=((2\alpha_n+1-2i)i)\mod{(\alpha_n-\alpha_{n-1})}, i=\in \{0,\dots,\alpha_n-\alpha_{n-1}-1\}\}$, $grp=\{g_i\mid g_i=(\alpha_n-\alpha_i)\mod{(\alpha_n-\alpha_{n-1})}, i=\in \{1,\dots,n-1\}\}$, and $mingen=\{m_i\mid m_i=\gcd(g_i,g_{i+1},\dots,g_{n-2},g_{n-1},\alpha_n-\alpha_{n-1}),i=\in \{1,\dots,n-1\}\}$.
% (\text{if }g_j=0\text{, use }\alpha_n-\alpha_{n-1}\text{ instead of }g_j),\\ m_{n-1}=0

\begin{theorem}\label{thm:4.1}
If $\exists f\in \mathbb{N},f\leq (2\alpha_n+1)/2$, 
\begin{displaymath}
(2\alpha_n+1-2f)f=\sum_{i=1}^{n-1} {(\alpha_n-\alpha_i)x_i}, 
\end{displaymath}
then $\forall k\in \mathbb{N}, 1\leq k\leq n-2$, $sum=\sum_{i=1}^{k} {g_i x_i}$, $new\_sol=\{ns_i\mid ns_i=(s_i-sum)\mod{(\alpha_n-\alpha_{n-1})},i=\in \{0,\dots,\alpha_n-\alpha_{n-1}-1\}\}$
\begin{displaymath}
    \exists j\in \{0, 1, \dots, \alpha_n - \alpha_{n-1} - 1\}, l\in\mathbb{Z}, ns_j=l m_{k+1}.
\end{displaymath}
\end{theorem}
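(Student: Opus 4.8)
The plan is to pass to the cyclic group $\mathbb{Z}/M\mathbb{Z}$, where I abbreviate $M := \alpha_n - \alpha_{n-1}$, and to read the desired conclusion off the subgroup generated by the tail of the $g_i$. First I would apply Theorem~\ref{thm:2.5}: the hypothesis $(2\alpha_n+1-2f)f = \sum_{i=1}^{n-1}(\alpha_n-\alpha_i)x_i$ guarantees that
\[
\gamma := \Big(\sum_{i=1}^{n-1} g_i x_i\Big) \bmod M \in sol .
\]
Hence there is an index $j \in \{0,\dots,M-1\}$ with $s_j = \gamma$, and this is the $j$ I will use to witness the existential claim.

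Next I would split the full sum at the cutoff $k$. With $sum = \sum_{i=1}^{k} g_i x_i$, we have $\sum_{i=1}^{n-1} g_i x_i = sum + \sum_{i=k+1}^{n-1} g_i x_i$, so reducing modulo $M$ yields
\[
ns_j = (s_j - sum)\bmod M = (\gamma - sum)\bmod M \equiv \sum_{i=k+1}^{n-1} g_i x_i \pmod{M}.
\]
The right-hand side is an integer linear combination (with coefficients $x_{k+1},\dots,x_{n-1}$) of the residues $g_{k+1},\dots,g_{n-1}$, so its class modulo $M$ lies in the subgroup $H := \langle g_{k+1},\dots,g_{n-1}\rangle$ of $\mathbb{Z}/M\mathbb{Z}$.

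The crux of the argument is the standard structural identity $H = \langle d\rangle$ with $d = \gcd(g_{k+1},\dots,g_{n-1},M)$, which is exactly $m_{k+1}$ by definition. This is provable in one line from B\'ezout's identity together with the fact that $M \equiv 0$ in $\mathbb{Z}/M\mathbb{Z}$. Since $m_{k+1}$ divides $M$, the representatives of $\langle m_{k+1}\rangle$ in $\{0,\dots,M-1\}$ are precisely $\{0, m_{k+1}, 2m_{k+1}, \dots, M - m_{k+1}\}$. Because $ns_j$ is by construction the canonical representative in $\{0,\dots,M-1\}$ and belongs to $H = \langle m_{k+1}\rangle$, it must coincide with one of these, i.e.\ $ns_j = l\,m_{k+1}$ for some integer $l$, which is the assertion.

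The main obstacle is making the group-theoretic step airtight: one must justify that the congruence class of the tail sum genuinely sits in $\langle m_{k+1}\rangle$ (immediate, since any integer combination of the $g_i$ lies in the subgroup they generate, and that subgroup equals the gcd subgroup), and then that the passage from ``multiple of $m_{k+1}$ modulo $M$'' to ``equal to $l\,m_{k+1}$ in $\mathbb{Z}$'' is legitimate. This final reduction hinges precisely on $m_{k+1}\mid M$, which is built into the definition of $m_{k+1}$ as a gcd taken together with the modulus $\alpha_n - \alpha_{n-1}$; without that divisibility the exact equality could fail.
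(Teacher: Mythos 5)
Your proposal is correct and follows essentially the same route as the paper: identify the witness $s_j$ as the residue of the full sum modulo $\alpha_n-\alpha_{n-1}$ (which the paper re-derives inline rather than citing Theorem~\ref{thm:2.5}), split the sum at index $k$, and conclude that the canonical representative of the tail sum is an integer multiple of $m_{k+1}$ because $m_{k+1}$ divides each $g_i$ for $i\geq k+1$ and divides the modulus. If anything, your subgroup/B\'ezout phrasing makes explicit the divisibility step that the paper dismisses as ``properties of integer operations,'' though only the divisibility direction (not B\'ezout) is actually needed.
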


\begin{proof}
\begin{equation}
    ((2\alpha_n+1-2f)f)\mod(\alpha_n-\alpha_{n-1})=\gamma
\end{equation}
\begin{subequations}
\begin{align}
\gamma&=(\sum_{i=1}^{n-1} {(\alpha_n-\alpha_i)x_i})\mod(\alpha_n-\alpha_{n-1})\\
&=(\sum_{i=1}^{n-1} {((\alpha_n-\alpha_i)\mod(\alpha_n-\alpha_{n-1}))x_i})\mod(\alpha_n-\alpha_{n-1})\\
&=(\sum_{i=1}^{n-1} {g_i x_i})\mod(\alpha_n-\alpha_{n-1})\\
&=(\sum_{i=1}^{k} {g_i x_i}+\sum_{i=k+1}^{n-1} {g_i x_i})\mod(\alpha_n-\alpha_{n-1})
\end{align}
\end{subequations}
\begin{equation}
    ((2\alpha_n+1-2f)f-\sum_{i=1}^{k} {g_i x_i})\mod(\alpha_n-\alpha_{n-1})=\gamma_1
\end{equation}
\begin{equation}
    \gamma_1=(\sum_{i=k+1}^{n-1} {g_i x_i})\mod(\alpha_n-\alpha_{n-1})
\end{equation}
\begin{equation}\label{eq:4.3}
\exists \delta\in\mathbb{Z}, 
(\sum_{i=k+1}^{n-1} {g_i x_i})\mod(\alpha_n-\alpha_{n-1})=\sum_{i=k+1}^{n-1} {g_i x_i} + \delta (\alpha_n-\alpha_{n-1})
\end{equation}

Because of the properties of integer operations, 
\begin{equation}
\exists l\in\mathbb{Z}, 
\sum_{i=k+1}^{n-1} {g_i x_i} + \delta (\alpha_n-\alpha_{n-1})=l m_{k+1}
\end{equation}
\begin{equation}
    ((2\alpha_n+1-2f)f-\sum_{i=1}^{k} {g_i x_i})\mod(\alpha_n-\alpha_{n-1})=l m_{k+1}
\end{equation}
\begin{subequations}
\begin{align}
(((2\alpha_n+1-2f)f)\mod(\alpha_n-\alpha_{n-1})-\sum_{i=1}^{k} {g_i x_i})\mod(\alpha_n-\alpha_{n-1})=l m_{k+1}
\end{align}
\end{subequations}

Because $(((2\alpha_n+1-2f)f)\mod(\alpha_n-\alpha_{n-1})-\sum_{i=1}^{k} {g_i x_i})\mod(\alpha_n-\alpha_{n-1})\in new\_sol$, thus, $\exists j\in \{0, 1, \dots, \alpha_n - \alpha_{n-1} - 1\}, ns_j=l m_{k+1}$.
\end{proof}
%\clearpage

\begin{algorithm}
\caption{The first part of reconstruction score sequence}
\label{alg:re_se1}
\begin{algorithmic}
\State{Define $D:=\{\alpha_1,\dots ,\alpha_n\}$}
\State{Define $E:=\{x_1,\dots ,x_n\}$}
\State{Define $sol=\{s_i\mid s_i=((2\alpha_n+1-2i)i)\mod{(\alpha_n-\alpha_{n-1})}, i=\in \{0,\dots,\alpha_n-\alpha_{n-1}-1\}\}$}
\State{Define $grp=\{g_i\mid g_i=(\alpha_n-\alpha_i)\mod{(\alpha_n-\alpha_{n-1})}, i=\in \{1,\dots,n-1\}\}$}
\State{Define $mingen=\{m_i\mid m_i=\gcd(g_i,g_{i+1},\dots,g_{n-2},g_{n-1},\alpha_n-\alpha_{n-1}),i=\in \{1,\dots,n-1\}\}$}

\State{Function \text{potential}($new\_sol,this\_mingen$)}
\For{each $s_i \in new\_sol$}
    \If{$(s_i==0 \textbf{ and }{this\_mingen}==0)$ or $(s_i\mod{this\_mingen}==0)$}
    \Return TRUE
    \EndIf
\EndFor

\Return FALSE

\State{Function \text{cal\_bound}($size,val,score$)}
\State{Define\\ $\gamma:=\frac{2(score-size)+1+\sqrt{(2(score-size)+1)^2+8(val-size(size-1)/2)}}{2}$}

\Return $\gamma$

\State{Function \text{end}($i,presize,preval$)}
\State{Define $this\_mod:=\alpha_n-\alpha_{n-1}$}
\State{Define $bound1:=(2\alpha_n+1)/2$}
\State{Define $bound2:=cal\_bound(presize,preval,\alpha_{n-1})$}
\While{$i\leq bound1$}
\State{Define $tmp1:=\frac{(2\alpha_n+1-2i)i-presize\cdot\alpha_n+preval}{\alpha_n-\alpha_{n-1}}$}
\If{$1\leq tmp1\leq bound2$ \textbf{and} $tmp1\in \mathbb{N}$}
\State{Define $tmp2:=cal\_bound(presize+tmp1,preval+tmp1\cdot \alpha_{n-1},\alpha_n)$}
\If{$tmp2 \in \mathbb{N}$ \textbf{and} $tmp2\geq 1$}
    \State{Update $x_n=tmp2$}
    \State{Update $x_{n-1}=tmp1$}
    \Return TRUE
\EndIf
\EndIf
\State{Update $i+=this\_mod$}
\EndWhile

\Return FALSE
\end{algorithmic}
\end{algorithm}
\clearpage

\addtocounter{algorithm}{-1}

\begin{algorithm}
\caption{The second part of reconstruction score sequence}
\label{alg:re_se2}
\begin{algorithmic}
\State{Function \text{reconstruction\_recursion}($sol\_now,layer,presize,preval$)}
\If{$layer\geq n-1$}
\For{$i:=0$ \textbf{to} $\alpha_n-\alpha_{n-1}-1$}
\If{$sol\_now.s_i==0$}
\If{$\text{end}(i,presize,preval)$}
\Return TRUE
\EndIf
\EndIf
\EndFor

\Return FALSE
\EndIf

\State{Define $bound:=cal\_bound(presize,preval,\alpha_{layer})$}

\For{$i:=1$ \textbf{to} $bound$}
\State{Define $sum:=\sum_{i=1}^{layer-1} {grp.g_i \cdot x_i}+grp.g_{layer}\cdot i$}
\State{Define $new\_sol:=\{s_i\mid s_i=(sol\_now.s_i-sum)\mod{(\alpha_n-\alpha_{n-1})}$, \\$i=\in \{0,\dots,\alpha_n-\alpha_{n-1}-1\}\}$}
\If{$\text{potential}(new\_sol,mingen_{layer+1})$}
\State{Update $x_{layer}=i$}
\If{$\text{reconstruction\_recursion}(new\_sol,layer+1,presize+i,preval+i\cdot \alpha_{layer})$}
\Return TRUE
\EndIf
\EndIf
\EndFor

\Return FALSE

\State \textbf{Function} \text{Main}()
\If{$\text{reconstruction\_recursion}(sol, 1, 0, 0) == \text{TRUE}$}
        \State \textbf{output} ``A correct score sequence has been found; the sets $D$ and $E$ form a valid sequence.''
\Else
    \State \textbf{output} ``No valid set $E$ exists for the given $D$ that can form a correct score sequence.''
\EndIf
\State \Return

\end{algorithmic}
\end{algorithm}

\begin{theorem}\label{thm:4.2}
If a correct score sequence exists, then algorithm \ref{alg:re_se1} and algorithm \ref{alg:re_se2} can reconstruct a correct score sequence.
\end{theorem}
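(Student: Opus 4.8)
The plan is to prove \emph{completeness} of the recursive search: whenever a valid exponent set exists, at least one branch of the recursion in Algorithm~\ref{alg:re_se1} and Algorithm~\ref{alg:re_se2} survives every pruning step and terminates successfully inside \texttt{end}. Soundness is immediate, since any sequence the algorithm reports has passed the per-layer \texttt{cal\_bound} tests together with the terminal equality, and these jointly reproduce the full condition of Theorem~\ref{thm:2.3} and hence certify a correct score sequence; the substantive direction is therefore completeness. I would fix a valid sequence $E=\{x_1,\dots,x_n\}$ guaranteed by hypothesis and follow this particular branch through the recursion, showing that it is never discarded.

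First I would set up the loop invariant for the call \texttt{reconstruction\_recursion}$(sol\_now,k,presize,preval)$: namely $presize=p_{k-1}=\sum_{i=1}^{k-1}x_i$, $preval=q_{k-1}=\sum_{i=1}^{k-1}\alpha_i x_i$, and each residue of $sol\_now$ indexed by $j$ equals $\big(((2\alpha_n+1-2j)j)-\sum_{i=1}^{k-1}g_i x_i\big)\bmod(\alpha_n-\alpha_{n-1})$. This invariant is preserved because one recursive step increments $presize$ by $x_k$, increments $preval$ by $\alpha_k x_k$, and subtracts $g_k x_k$ from every residue, matching exactly the quantities appearing in Theorem~\ref{thm:2.5} and Theorem~\ref{thm:4.1}.

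Next I would argue that the valid branch is never pruned for layers $k\le n-2$. The loop runs the candidate $i$ from $1$ up to the value returned by \texttt{cal\_bound} at $(presize,preval,\alpha_k)$, and by Theorem~\ref{thm:2.3} this upper bound is precisely the largest admissible value of $x_k$, so the true value lies in the loop's range. The only additional filter is the \texttt{potential} test against $mingen_{k+1}$; here I would invoke Theorem~\ref{thm:4.1} directly, since for the genuine prefix $x_1,\dots,x_k$ it guarantees an index $j$ with $ns_j=l\,m_{k+1}$, that is, an entry of $new\_sol$ divisible by $m_{k+1}$. Hence \texttt{potential} returns TRUE on the valid branch and the recursion advances to layer $k+1$.

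Finally I would treat the base case $k=n-1$ carried out by \texttt{end}. Expanding $\sum_{i=1}^{n-1}(\alpha_n-\alpha_i)x_i=\alpha_n p_{n-2}-q_{n-2}+(\alpha_n-\alpha_{n-1})x_{n-1}$ and substituting the equality form of Theorem~\ref{thm:2.4} shows that the true $x_{n-1}$ coincides with $tmp1$ precisely when the loop variable reaches the genuine $f$, after which $tmp2$ computed by \texttt{cal\_bound} at score $\alpha_n$ reproduces $x_n$ through the equality case of Theorem~\ref{thm:2.3}. I expect the main obstacle to be the modular bookkeeping at this last step: I must verify that the zero entry of $sol\_now$ selected when $layer=n-1$ pins $f$ to a residue class modulo $\alpha_n-\alpha_{n-1}$ containing the true $f$, so that the arithmetic-progression sweep in \texttt{end} (stepping by $\alpha_n-\alpha_{n-1}$ up to $(2\alpha_n+1)/2$) actually reaches it and returns integer values $tmp1,tmp2$ equal to $x_{n-1},x_n$. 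Establishing this alignment, together with the fact that $tmp1$ and $tmp2$ are forced to be integers only on valid branches, completes the argument.
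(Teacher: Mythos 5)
Your proposal is correct and takes essentially the same approach as the paper: the paper's own proof is a brief observation that Algorithms~\ref{alg:re_se1} and~\ref{alg:re_se2} traverse exactly the sequences satisfying the necessary conditions of Theorem~\ref{thm:2.4} and Theorem~\ref{thm:4.1}, so a correct sequence, which necessarily satisfies them, is never pruned — precisely the completeness argument you develop. Your write-up simply supplies the details the paper leaves implicit (the loop invariant on $presize$, $preval$, and $sol\_now$, survival of the valid branch under \texttt{cal\_bound} and \texttt{potential}, and the modular alignment guaranteeing the sweep in \texttt{end} reaches the true $f$), making it a more rigorous rendering of the same argument rather than a different route.
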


\begin{proof}
According to Theorem \ref{thm:2.4}, Theorem \ref{thm:4.1}, and the workflow algorithm \ref{alg:re_se1} and algorithm \ref{alg:re_se2}, It can be observed that algorithm \ref{alg:re_se1} and algorithm \ref{alg:re_se2} traverse all sequences that satisfy the necessary conditions of Theorem \ref{thm:2.4} and Theorem \ref{thm:4.1}. Since a correct score sequence must be among the sequences that satisfy the necessary conditions, algorithm \ref{alg:re_se1} and algorithm \ref{alg:re_se2} are able to reconstruct the correct score sequence.
\end{proof}

\begin{theorem}\label{thm:4.3}
If algorithm \ref{alg:re_se1} and algorithm \ref{alg:re_se2} returns true in the function named 'end', then algorithm \ref{alg:re_se1} and algorithm \ref{alg:re_se2} has reconstructed a correct score sequence.
\end{theorem}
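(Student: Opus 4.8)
The plan is to prove soundness of the termination test: I will show that at the instant \texttt{end} returns TRUE, the fully assigned tuple $x_1,\dots,x_n$ satisfies every inequality of Theorem~\ref{thm:2.3} together with the terminal equality. Since Theorem~\ref{thm:2.3} is an if-and-only-if characterization that is chained through Theorem~\ref{thm:2.2} back to Landau's Theorem~\ref{thm:lan}, establishing those conditions is exactly what it means for $\{D,E\}$ to be a correct score sequence, so the conclusion is immediate once the conditions are verified. Note that the \texttt{potential} filter and the group-theoretic machinery of Theorem~\ref{thm:4.1} only prune branches and play no role in this direction; correctness follows solely from the explicit bounds and integrality guards that any surviving assignment must pass.

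First I would fix the dictionary between the recursion's running state and the quantities $p_{k-1},q_{k-1}$ of Theorem~\ref{thm:2.3}. Tracing \texttt{reconstruction\_recursion} from its initial call with $(layer,presize,preval)=(1,0,0)$, a straightforward induction on $layer$ shows that on entry to layer $k$ we have $presize=p_{k-1}=\sum_{i<k}x_i$ and $preval=q_{k-1}=\sum_{i<k}\alpha_i x_i$, and that \texttt{cal\_bound}$(presize,preval,\alpha_k)$ returns precisely the right-hand side of~\eqref{eq:2.1}. Given this, the loop header ranging $i$ from $1$ to $bound=\,$\texttt{cal\_bound}$(p_{k-1},q_{k-1},\alpha_k)$ guarantees that every assigned value $x_k$ with $1\le k\le n-2$ obeys $1\le x_k\le$ (right-hand side of~\eqref{eq:2.1}), i.e.\ the $k$-th inequality of Theorem~\ref{thm:2.3} holds.

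Next I would treat the two final coordinates produced inside \texttt{end}, which is called with $presize=p_{n-2}$ and $preval=q_{n-2}$. The verified guard $1\le tmp1\le bound2$ with $tmp1\in\mathbb{N}$, where $bound2=\,$\texttt{cal\_bound}$(p_{n-2},q_{n-2},\alpha_{n-1})$, is exactly inequality~\eqref{eq:2.1} for $k=n-1$, so $x_{n-1}=tmp1$ is admissible; the explicit formula for $tmp1$ merely enumerates candidate values of the parameter $f$ of Theorem~\ref{thm:2.4} and back-solves for $x_{n-1}$, but only the guard is needed here. Setting $x_{n-1}=tmp1$ yields $presize+tmp1=p_{n-1}$ and $preval+tmp1\,\alpha_{n-1}=q_{n-1}$, whence $tmp2=\,$\texttt{cal\_bound}$(p_{n-1},q_{n-1},\alpha_n)$ equals the right-hand side of~\eqref{eq:2.1} for $k=n$ \emph{exactly}. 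The remaining guard $tmp2\in\mathbb{N}$, $tmp2\ge1$ therefore forces $x_n=tmp2$ to be a positive integer attaining that bound with equality, which is precisely the ``equality when $k=n$'' clause of Theorem~\ref{thm:2.3}.

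The step I expect to demand the most care is the side condition $q_{k-1}-p_{k-1}(p_{k-1}-1)/2\ge0$ for each $k$, since this is what keeps the discriminant $\triangle$ under the square root in \texttt{cal\_bound} nonnegative, so that~\eqref{eq:2.1} is a genuine real bound rather than a vacuous comparison. I would discharge it by a parallel induction: the base case gives $q_0-p_0(p_0-1)/2=0$, and whenever $x_k$ is chosen with $x_k\le$ (right-hand side of~\eqref{eq:2.1}), the algebraic equivalence recorded in the proof of Theorem~\ref{thm:2.3} shows this is the same as $\sum_{i\le k}\alpha_i x_i\ge p_k(p_k-1)/2$, i.e.\ $q_k-p_k(p_k-1)/2\ge0$, which is exactly the hypothesis required at step $k+1$ (and at the call to \texttt{end}). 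With all inequalities of~\eqref{eq:2.1}, the nonnegativity conditions, and the terminal equality confirmed, the tuple $x_1,\dots,x_n$ satisfies the full hypothesis of Theorem~\ref{thm:2.3} and is therefore a correct score sequence, which completes the proof.
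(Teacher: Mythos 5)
Your proof is correct and takes essentially the same route as the paper's: the paper's own (very terse) proof simply asserts that any assignment on which \texttt{end} returns TRUE satisfies Theorem~\ref{thm:2.3} ``from the workflow'' of the algorithms, and hence is a correct score sequence. Your proposal supplies exactly the details the paper leaves implicit --- the correspondence between $(presize,preval)$ and $(p_{k-1},q_{k-1})$, the identification of the loop bounds and guards with the inequalities of \eqref{eq:2.1} (including the terminal equality at $k=n$), and the inductive discharge of the side condition $q_{k-1}-p_{k-1}(p_{k-1}-1)/2\geq 0$ --- so it is a faithful, fleshed-out version of the same argument.
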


\begin{proof}
From the workflow of algorithm \ref{alg:re_se1} and algorithm \ref{alg:re_se2}, it can be seen that the sequence which makes the function named 'end' return true satisfies Theorem \ref{thm:2.3}. Therefore, when the function named 'end' returns true, algorithm \ref{alg:re_se1} and algorithm \ref{alg:re_se2} reconstructs a correct score sequence.
\end{proof}

\subsection{Experimental results}

Here, I will present 9 reconstructed score sequences along with the algorithm running times for reconstructing these sequences using algorithm \ref{alg:re_se1} and algorithm \ref{alg:re_se2} for the corresponding score sets.

Note: All runtime data presented here were obtained on a standard consumer-grade laptop, a ThinkPad T14 with an i5-10210U processor, using Visual Studio C++.

\begin{table*}[!t]%
\centering %
\caption{Restored Score Sequences\label{tab:fast}}%
\begin{tabular*}{\textwidth}{@{\extracolsep\fill}lll@{\extracolsep\fill}}
\toprule
\textbf{Scores} & \textbf{Exponents} & \textbf{Time [s]} \\
\midrule

%$ \{ 0, 19, 21, 24, 31, 36, 37, 39, $ & $ \{ 1, 1, 1, 1, 1, 1, 1, 1, $ & 0.002 \\ $ 42, 43, 44, 54, 65, 71, 77, 89, $ & $ 1, 1, 1, 1, 1, 1, 1, 1, $ & \\ $ 91, 93, 108, 114, 122, 131, 133, 135, $ & $ 1, 1, 1, 1, 1, 1, 1, 1, $ & \\ $ 142, 150, 151, 159, 163, 168, 177, 184, $ & $ 1, 1, 1, 1, 1, 1, 1, 1, $ & \\ $ 200, 211, 212, 243, 254, 271, 286, 297, $ & $ 1, 1, 1, 1, 1, 1, 1, 1, $ & \\ $ 335, 342, 355, 360, 361, 377, 394, 396, $ & $ 1, 1, 1, 1, 1, 1, 1, 1, $ & \\ $ 399, 400, 405, 410, 435, 444, 445, 455, $ & $ 1, 1, 1, 1, 1, 1, 1, 1, $ & \\ $ 473, 490, 493, 495, 496, 506, 537, 544, $ & $ 1, 1, 1, 1, 1, 1, 1, 1, $ & \\ $ 553, 554, 572, 625, 647, 651, 658, 672, $ & $ 1, 1, 1, 1, 1, 1, 1, 1, $ & \\ $ 674, 678, 686, 691, 694, 714, 716, 719, $ & $ 1, 1, 1, 1, 1, 1, 1, 1, $ & \\ $ 725, 746, 756, 812, 813, 830, 831, 846, $ & $ 1, 1, 1, 1, 1, 1, 1, 1, $ & \\ $ 854, 862, 863, 865, 869, 877, 889, 930, $ & $ 1, 1, 1, 1, 1, 1, 1, 1, $ & \\ $ 961, 968, 978, 998, \} $ & $ 2, 1, 1177, 637, \} $ & \\ \hline
$ \{ 351, 991, 1136, 1254, 1749, 1886, 2062, 2088, $ & $ \{ 1, 1, 1, 1, 1, 1, 1, 1, $ & 0.003 \\ $ 2387, 2632, 2804, 3827, 4034, 4115, 4153, 4436, $ & $ 1, 1, 1, 1, 1, 1, 1, 1, $ & \\ $ 4833, 5105, 5349, 5895, 6118, 6368, 6373, 7023, $ & $ 1, 1, 1, 1, 1, 1, 1, 2, $ & \\ $ 7142, 7181, 7868, 7892, 9371, 9983, \} $ & $ 1, 1, 2, 1, 3817, 15866, \} $ & \\ \hline
%$ \{ 1037, 1108, 1478, 1773, 1854, 1910, 1924, 2040, $ & $ \{ 1, 1, 1, 1, 1, 1, 1, 1, $ & 0.001 \\ $ 2075, 2120, 2245, 2333, 2506, 2893, 3596, 3652, $ & $ 1, 1, 1, 1, 1, 1, 1, 1, $ & \\ $ 3852, 4806, 4828, 5293, 5811, 5816, 5817, 6053, $ & $ 1, 1, 1, 1, 1, 1, 2, 1, $ & \\ $ 6203, 6623, 7175, 7520, 8168, 8333, \} $ & $ 1, 1, 1, 8, 681, 15920, \} $ & \\ \hline
$ \{ 275, 552, 644, 701, 924, 1258, 1428, 1865, $ & $ \{ 1, 1, 1, 1, 1, 1, 1, 1, $ & 0.001 \\ $ 2276, 3040, 3238, 3366, 3656, 3939, 3947, 4196, $ & $ 1, 1, 1, 1, 1, 1, 1, 1, $ & \\ $ 5919, 6350, 6568, 7980, 8235, 8277, 8507, 8561, $ & $ 1, 1, 1, 1, 1, 2, 1, 1, $ & \\ $ 8993, 9026, 9131, 9365, 9398, 9548, \} $ & $ 1, 1, 1, 1, 1378, 17654, \} $ & \\ \hline
$ \{ 274, 1324, 1377, 1703, 2294, 2496, 2562, 2586, $ & $ \{ 1, 1, 1, 1, 1, 1, 1, 1, $ & 0.003 \\ $ 3140, 3150, 3251, 3467, 3703, 3773, 4134, 4274, $ & $ 1, 1, 1, 1, 1, 1, 1, 1, $ & \\ $ 4838, 5685, 6241, 6644, 6919, 7502, 7583, 7628, $ & $ 1, 1, 1, 1, 2, 1, 1, 1, $ & \\ $ 7688, 8081, 8498, 8525, 9056, 9464, \} $ & $ 1, 1, 1, 1, 1831, 16975, \} $ & \\ \hline
$ \{ 95, 146, 690, 714, 960, 1042, 1209, 1257, $ & $ \{ 1, 1, 1, 1, 1, 1, 1, 1, $ & 0.002 \\ $ 1943, 2027, 2058, 2063, 2315, 3233, 3316, 3627, $ & $ 1, 1, 1, 1, 1, 1, 1, 1, $ & \\ $ 3928, 5155, 5167, 5359, 5374, 6159, 6985, 7565, $ & $ 1, 1, 1, 1, 1, 2, 1, 1, $ & \\ $ 8435, 9030, 9070, 9300, 9350, 9620, \} $ & $ 1, 1, 1, 2, 5533, 13504, \} $ & \\ \hline
$ \{ 943, 1021, 1297, 1619, 1860, 1934, 2172, 2206, $ & $ \{ 1, 1, 1, 1, 1, 1, 1, 1, $ & 0.008 \\ $ 2282, 2532, 2654, 3576, 4518, 4593, 5693, 5715, $ & $ 1, 1, 1, 1, 1, 1, 1, 1, $ & \\ $ 7679, 8123, 9976, 11681, 12031, 12343, 12706, 12862, $ & $ 1, 1, 1, 2, 1, 1, 1, 1, $ & \\ $ 14545, 15256, 17011, 17335, 17743, 19570, \} $ & $ 1, 1, 1, 3, 35427, 13, \} $ & \\ \hline
$ \{ 568, 1586, 1610, 1621, 2357, 2538, 2542, 2971, $ & $ \{ 1, 1, 1, 1, 1, 1, 1, 1, $ & 0.005 \\ $ 3769, 3922, 5024, 5760, 8399, 8411, 9549, 9784, $ & $ 1, 1, 1, 1, 1, 1, 1, 1, $ & \\ $ 10244, 10557, 10994, 11487, 11928, 13653, 14631, 15540, $ & $ 1, 1, 2, 1, 1, 1, 1, 1, $ & \\ $ 15753, 16314, 16464, 17166, 17862, 18843, \} $ & $ 1, 1, 1, 5, 71, 37563, \} $ & \\ \hline
$ \{ 1179, 1919, 2206, 2974, 3063, 3114, 3182, 3212, $ & $ \{ 1, 1, 1, 1, 1, 1, 1, 1, $ & 0.008 \\ $ 3649, 3897, 5122, 5233, 6131, 6655, 7288, 8246, $ & $ 1, 1, 1, 1, 1, 1, 1, 1, $ & \\ $ 8880, 8927, 9220, 9331, 9970, 11095, 11713, 11728, $ & $ 1, 2, 1, 1, 1, 1, 1, 1, $ & \\ $ 12376, 12844, 15718, 16510, 17932, 19939, \} $ & $ 1, 1, 1, 1, 3031, 36493, \} $ & \\ \hline
$ \{ 701, 1021, 2614, 3755, 4387, 5158, 6863, 7828, $ & $ \{ 1, 1, 1, 1, 1, 1, 1, 1, $ & 0.008 \\ $ 8098, 8108, 8151, 8307, 8868, 8897, 10037, 10213, $ & $ 1, 1, 1, 1, 1, 1, 1, 1, $ & \\ $ 11299, 11610, 11901, 12234, 12276, 12846, 13107, 14961, $ & $ 2, 1, 1, 1, 1, 1, 1, 1, $ & \\ $ 16023, 17835, 17874, 18015, 18594, 19509, \} $ & $ 1, 1, 1, 4, 4394, 34371, \} $ & \\ \hline
$ \{ 125, 259, 769, 1293, 3514, 3754, 3769, 5258, $ & $ \{ 1, 1, 1, 1, 1, 1, 1, 1, $ & 0.006 \\ $ 5605, 6144, 6713, 6805, 6955, 7456, 7616, 9009, $ & $ 1, 1, 1, 1, 1, 1, 1, 2, $ & \\ $ 9022, 9364, 11272, 11557, 11593, 11914, 12043, 15208, $ & $ 1, 1, 1, 1, 1, 1, 1, 1, $ & \\ $ 17074, 17272, 18154, 18424, 18889, 19846, \} $ & $ 1, 1, 1, 5, 16583, 22243, \} $ & \\ \hline

\bottomrule
\end{tabular*}
\end{table*}

\clearpage

\section{Conclusion}\label{sec:con}

This paper establishes a necessary and sufficient condition and a necessary condition for reconstructing score sequences from score sets in tournament graphs, building on Landau’s theorem. These conditions define a structured set that enables the use of group-theoretic techniques, with the solution space forming a subset of this structured framework. Beyond providing a rigorous foundation for reconstruction, these conditions offer a broader perspective on related combinatorial problems, including the reconstruction of valid score sets under frequency constraints on distinct scores. Leveraging these theoretical insights, we develop three algorithms: a polynomial-time dynamic programming algorithm for reconstruction, a scalable algorithm for larger cases, and a polynomial-time network-building method for systematically identifying all possible score sequences for a given score set. These contributions bridge existing gaps in tournament graph theory and provide a structured approach to solving reconstruction problems efficiently. Furthermore, we focus on the overlap between the necessary and sufficient condition and the necessary condition, aiming to refine and derive more practical conditions that improve the efficiency and applicability of reconstruction methods.

\section*{Funding}
This research did not receive any specific grant from funding agencies in the public, commercial, or not-for-profit sectors.

\section*{Acknowledgments}

I would like to thank Professor Boris Melnikov for bringing to my attention the problem of reconstructing the score sequence from a tournament's score set.
%%%%

\section*{Declaration of generative AI and AI-assisted technologies in the writing process}

During the preparation of this work the author used ChatGPT by OpenAI in order to improve language and readability, with caution. After using this tool, the author reviewed and edited the content as needed and takes full responsibility for the content of the publication.

%%%%%%%%%%%%%%%%%%%%%%%%%%%%%%%%%%%%%%%%%%%%%%%%%%%%%%%%%%%%%%%%%%%%%%%%%%%%%%%%%
\bibliographystyle{elsarticle-num} 
\bibliography{references}

%% If you have bib database file and want bibtex to generate the
%% bibitems, please use
%%
%%  \bibliographystyle{elsarticle-num} 
%%  \bibliography{<your bibdatabase>}

%% else use the following coding to input the bibitems directly in the
%% TeX file.

%% Refer following link for more details about bibliography and citations.
%% https://en.wikibooks.org/wiki/LaTeX/Bibliography_Management

\end{document}